\newcommand{\bra}[1]{\langle #1 |}        
\newcommand{\ket}[1]{{| #1 \rangle}}
\def\half{\textstyle{\frac  1 2}}
\newcommand{\C}{{\mathbb C}}
\newcommand{\Z}{{\mathbb Z}}
\newcommand{\B}{{\mathcal B}}
\newcommand{\F}{{\mathcal F}}
\newcommand{\Hc}{\mathcal{H}}
\newcommand{\V}{{\mathcal V}}
\newcommand{\N}{{\mathcal N}}
\newcommand{\id}{\textit{id}}
\newcommand{\e}{\varepsilon}
\newcommand{\Ag}{\mathfrak{a}}
\newcommand{\Bg}{\mathfrak{b}}
\newcommand{\Lc}{\mathcal{L}}
\newcommand{\vac}{\ket{\mathrm{vac}}}
\newcommand{\nn}{\nonumber}
\newtheorem{thm}{Theorem}[section]
\newtheorem{prop}[thm]{Proposition}
\newtheorem{lem}[thm]{Lemma}
\newtheorem{rem}[thm]{Remark}
\newtheorem{conj}[thm]{Conjecture}
\numberwithin{equation}{section}
\newcommand{\bla}{\boldsymbol{\lambda}}
\newcommand{\bmu}{\boldsymbol{\mu}}
\begin{document}
\begin{title}[CFT approach to $q$-PVI]
{CFT approach to the $q$-Painlev{\'e} VI equation}
\end{title}
\author{M. Jimbo, H. Nagoya, and H. Sakai}
\address{MJ: Department of Mathematics,
Rikkyo University, Toshima-ku, Tokyo 171-8501, Japan}
\email{jimbomm@rikkyo.ac.jp}
\address{HN: School of Mathematics and Physics, Kanazawa University, Kanazawa, Ishikawa 920-1192, Japan}
\email{nagoya@se.kanazawa-u.ac.jp}
\address{HS: Graduate School of Mathematical Sciences,
University of Tokyo, Meguro-ku, Tokyo 153-8914, Japan}
\email{sakai@ms.u-tokyo.ac.jp}

\begin{abstract} 
Iorgov, Lisovyy, and Teschner established a 
connection between isomonodromic deformation of linear differential equations 
and Liouville conformal field theory at $c=1$. 
In this paper we present a $q$ analog of their construction. 
We show that the general solution of the 
 $q$-Painlev{\'e} VI equation is a ratio of four tau functions, 
each of which is 
given by a combinatorial series arising in the AGT correspondence. 
We also propose conjectural bilinear equations for the tau functions.
\end{abstract}

\dedicatory{
Dedicated to Professor Kazuo Okamoto on the occasion of
his seventieth birthday}
\maketitle

\section{Introduction}

Theory of isomonodromic deformation and Painlev{\'e} equations
has long been a subject of intensive study, with many applications both in 
mathematics and in physics.  
Recent discovery of the isomonodromy/CFT correspondence 
\cite{GIL}, \cite{GIL1}, \cite{ILTy}
opened up a new perspective to this domain. 
In a nutshell, it states that the Painlev{\'e}
tau functions are Fourier transform of conformal blocks at $c=1$. 
The AGT relation  allows one to write a combinatorial 
formula for the latter, thereby giving an explicit series representation
for generic Painlev{\'e} tau functions. 

The initial conjectures of 
\cite{GIL}, \cite{GIL1}, \cite{ILTy}
concern the sixth Painlev\'e (PVI) equation, and at present 
three different proofs are known. 
In the first approach \cite{ILTe} 
one constructs solutions to the linear problem 
out of conformal blocks with the insertion of a degenerate primary field.
The second approach \cite{BS} also makes use of CFT ideas, 
but leads directly to bilinear differential equations for tau functions. 
The third approach \cite{GL} is based on the Riemann-Hilbert problem. 
It is shown that Fredholm determinant representation
for tau functions reproduces the combinatorial series expansion,  
without recourse to conformal blocks or the AGT relation. 
Painlev{\'e} equations of other types are also studied 
by confluence \cite{GIL1},  
in the framework of irregular conformal blocks 
\cite{N, BS2},
through the AGT correspondence \cite{BLMST}, and using Fredholm determinants \cite{GL1}. 

It is natural to ask whether a similar picture holds  for 
the $q$ difference analog of Painlev{\'e} equations. 
This has been studied in \cite{BS1} 
for the qPIII($D_8$) equation. 
The aim of the present paper is to give
a $q$ analog of the construction of \cite{ILTe}.
 
Our method is based on the work \cite{AFS}, 
where the five-dimensional AGT correspondence 
was studied in the light of the quantum toroidal $\mathfrak{gl}_1$ algebra
(also known as the Ding-Iohara-Miki algebra).  
Having the central charge $c=1$ corresponds to choosing 
$t=q$ where $t,q$ are the parameters of the algebra.  
This is a sort of classical limit, 
in the sense that the quantum algebra reduces to 
the enveloping algebra of a Lie algebra, see Section 2.1.

We use the trivalent vertex operators  of \cite{AFS} to
introduce a $q$ analog of primary fields and conformal blocks. 
As in the continuous case, a key property is the braiding relation 
in the presence of the degenerate field. 
Following the scheme of \cite{ILTe}, 
we then write the solution $Y(x)$ of the Riemann problem 
as Fourier transform of appropriate conformal block functions.
For the sake of concreteness, 
we restrict our discussion to the setting 
relevant to the qPVI equation. 
Also we do not discuss the issue of convergence of these series.  

The above construction allows us to relate  
the unknown functions $y,z$ of qPVI to tau functions. 
More specifically, we express $y,z$ in terms of four tau functions
which arise from the expansions at $x=0$ or $x=\infty$; see
\eqref{y-tau}, \eqref{z-tau} below. 
On the other hand, 
in view of the work \cite{S}, \cite{TM} we expect that there are 
four more tau functions which naturally 
enter the picture; these are the ones related to 
the singular points other than $x=0,\infty$. 
However, for the lack of the notion of fusion, 
it is not clear to us how to related them to $y,z$. 
The absence of fusion is the main obstacle in the $q$ analog of 
isomonodromy theory. 
We present conjectural bilinear relations
satisfied by these eight tau functions, see \eqref{bilin-1}--\eqref{bilin-8}.  
Assuming the conjecture we give the final expression for $y,z$
in \eqref{yz-final}.
\bigskip

The plan of the paper is as follows.

In Section 2, we give a brief account of the results of \cite{AFS}, 
restricting to the special case $t=q$. 
We introduce a $q$ analog of primary fields, 
conformal blocks, degenerate $(2,1)$ fields and their braiding. 
In Section 3, we apply them to the $q$ analog of the Riemann problem, 
focusing attention to the setting of the $q$PVI equation. 
We derive a formula 
expressing the unknowns $y,z$ in terms of four tau functions. 
We then propose conjectural bilinear difference
equations for eight tau functions. 

In Appendix we give a direct combinatorial proof of the braiding relation
used in the text.
\bigskip

\textit{Notation.}\quad Throughout the paper we fix $q\in\C^\times$ such that $|q|<1$. 
We set 
$[u]=(1-q^u)/(1-q)$,  
$(a;q)_N=\prod_{j=0}^{N-1}(1-aq^j)$, 
$(a_1,\ldots,a_k;q)_{\infty}=\prod_{j=1}^k(a_j;q)_\infty$, and 
$(a;q,q)_\infty=\prod_{j,k=0}^\infty(1-aq^{j+k})$.
We use the $q$ Gamma function, $q$ Barnes function and the theta function 
\begin{align*}
&\Gamma_q(u)=\frac{(q;q)_\infty}{(q^u;q)_\infty}(1-q)^{1-u},
\quad 
G_q(u)=\frac{(q^u;q,q)_\infty}{(q;q,q)_\infty}(q;q)_\infty^{u-1}(1-q)^{-(u-1)(u-2)/2},
\\
&\vartheta(u)=q^{u(u-1)/2}\Theta_q(q^u),\quad 
\Theta_q(x)=(x,q/x,q;q)_\infty, 
\end{align*}
which satisfy $\Gamma_q(1)=G_q(1)=1$ and 
\begin{align*}
&\Gamma_q(u+1)=[u]\Gamma_q(u),  \quad 
G_q(u+1)=\Gamma_q(u)G_q(u), \\
&\vartheta(u+1)=-\vartheta(u)=\vartheta(-u)\,.
\end{align*}

A partition is a finite sequence of positive integers 
$\lambda=(\lambda_1,\ldots,\lambda_l)$ such that 
$\lambda_1\ge\ldots\ge\lambda_{l}>0$. We set $\ell(\lambda)=l$. 
The conjugate partition $\lambda'=(\lambda'_1,\ldots,\lambda'_{l'})$
is defined by $\lambda'_j=\sharp\{i\mid \lambda_i\ge j\}$,
$l'=\lambda_1$.   
We denote by $\Lambda$ the set of all partitions. 
We regard a partition $\lambda$ also as the subset 
$\{(i,j)\in\Z^2\mid 1\le j\le \lambda_i,\ i\ge 1\}$ of $\Z^2$,  
and denote its cardinality by $|\lambda|$. 
For $\square=(i,j)\in\Z_{>0}^2$ we set $a_\lambda(\square)=\lambda_i-j$
and $\ell_\lambda(\square)=\lambda'_j-i$.
In the last formulas we set $\lambda_i=0$ if $i>\ell(\lambda)$
(resp. $\lambda'_j=0$ if $j>\ell(\lambda')$).

\section{$q$ analog of conformal blocks}

In this section, we collect background materials 
from \cite{AFS}. 
We shall restrict ourselves to the special case $t=q$, 
where formulas of \cite{AFS} simplify considerably. 
We then introduce a $q$ analog of primary fields, 
conformal blocks and the braiding relation for the degenerate fields 
(analog of $(2,1)$ operators).

\subsection{Toroidal Lie algebra}
Let $Z,D$ be non-commuting variables satisfying $ZD=qDZ$. 
The ring of Laurent polynomials $\C\langle Z^{\pm1},D^{\pm 1}\rangle$ 
in $Z,D$
is a Lie algebra, with the Lie bracket given by the commutator.
We consider its two-dimensional central extension, which we denote by $\Lc$.   
As a vector space $\Lc$ has $\C$-basis $Z^kD^l$ ($(k,l)\in\Z^2\setminus\{(0,0)\}$)
and central elements $c_1,c_2$.
The Lie bracket is defined by 
\begin{align*}
&[Z^{k_1}D^{l_1},Z^{k_2}D^{l_2}]=(q^{-l_1k_2}-q^{-l_2k_1})Z^{k_1+k_2}D^{l_1+l_2}
+\delta_{k_1+k_2,0}\delta_{l_1+l_2,0}q^{-k_2l_1} (k_1c_1+l_1 c_2)\,.
\end{align*}
We set 
\begin{align*}
&a_r=Z^r,\quad x^+_n=DZ^n,\quad  x^-_n=Z^nD^{-1}\,,
\\
& b_r=D^{-r}, \quad y^+_n=Z D^{-n},\quad  y^-_n=D^{-n}Z^{-1}\,,
\end{align*}
and introduce the generating series 
$x^\pm(z)=\sum_{n\in\Z} x^\pm_nz^{-n}$,  $y^\pm(z)=\sum_{n\in\Z} y^\pm_nz^{-n}$. 
We have two Heisenberg subalgebras of $\Lc$,  $\Ag=\oplus_{r\neq 0}\C a_r\oplus\C c_1$ and 
$\Bg=\oplus_{r\neq 0}\C b_r\oplus\C c_2$: 
$[a_r,a_s]=r\delta_{r+s,0}c_1$, $[b_r,b_s]=-r\delta_{r+s,0}c_2$. 
We say that an $\Lc$-module has level $(l_1,l_2)$ if 
$c_1$ acts as $l_1\cdot \id$ and $c_2$ acts as $-l_2\cdot \id$.  

The Lie algebra $\Lc$ has automorphisms $S,T$ given by 
\begin{align*}
&S:Z\mapsto D,\quad D\mapsto Z^{-1},\quad c_1\mapsto c_2,
\quad c_2\mapsto -c_1,\\ 
&T:Z\mapsto Z,\quad D\mapsto DZ,\quad c_1\mapsto c_1,\quad c_2\mapsto c_1+c_2\,.
\end{align*}
They satisfy $S^4=(ST)^6=\id$, so that the group $SL(2,\Z)$ 
acts on $\Lc$ by automorphisms. We have $S(b_r)=a_r$, 
$S(y^{\pm}(z))=x^\pm(z)$.

\subsection{Fock representations}
The most basic representations of $\Lc$
are the Fock representations of levels $(1,0)$ and $(0,1)$. 
Following \cite{AFS} we denote them by 
$\F^{(1,0)}_u$ and $\F^{(0,1)}_u$ ($u\in \C^\times$), respectively. 

The Fock representation $\F^{(1,0)}_u$ is 
irreducible under the Heisenberg subalgebra $\Ag$. 
As a vector space we have $\F^{(1,0)}_u=\C[a_{-1},a_{-2},\ldots]\vac$, where 
$\vac$ is a cyclic vector such that $a_r\vac=0$  ($r>0$).
The action of the generators $x^\pm_n$ is given by vertex operators
\begin{align*}
&x^\pm(z)\mapsto \frac{1}{1-q^{\mp1}}u^{\pm1}
\exp\bigl(\mp\sum_{r\ge1}\frac{1-q^r}{r}a_{-r}z^r\bigr) 
\exp\bigl(\pm\sum_{r\ge1}\frac{1-q^{-r}}{r}a_{r}z^{-r}\bigr)\,.
\end{align*}

The Fock representation $\F^{(0,1)}_u$ is the pullback 
of $\F^{(1,0)}_u$ by the automorphism $S$. 
It is irreducible with respect to the Heisenberg subalgebra $\Bg$, and 
$y^\pm(z)$ act as vertex operators. 
On the other hand, the action of $\Ag$ on $\F^{(0,1)}_u$ is commutative.
The joint eigenvectors 
$\{\ket{\lambda}\}_{\lambda\in\Lambda}$ of $\Ag$
are labeled by all partitions and constitute a basis of $\F^{(0,1)}_u$.  
In particular, for the vector $\ket{\emptyset}$ we have
\begin{align*}
a_r\ket{\emptyset}=-\frac{u^r}{1-q^r}\ket{\emptyset}\,,
\quad x^-(z) \ket{\emptyset}=0,
\quad x^+(z) \ket{\emptyset}=\delta(u/z)\ket{(1)}
\end{align*}
where $\delta(x)=\sum_{n\in\Z}x^n$. 
Formulas for the action of $a_r$ and $x^{\pm}(z)$ 
on a general $\ket{\lambda}$ 
can also be written explicitly. Since we do not use them in this paper, 
we refer the reader to \cite{FT,AFS}. 

Tensor products of $N$ copies of $\F^{(1,0)}_u$ are closely related to 
representations of the $W_N$ algebra (see e.g. \cite{FHHSY}, \cite{Mi}).
We briefly explain this point in the case $N=2$. 

Consider the tensor product of two Fock representations and its decomposition 
with respect to the Heisenberg subalgebra $\Ag$,  
\begin{align*}
\F^{(1,0)}_{u_1}\otimes \F^{(1,0)}_{u_2}=\Hc\otimes \Omega_{u_1,u_2}\,,
\end{align*}
where $\Hc$ is the Fock space of $\Ag$ 
and $\Omega_{u_1,u_2}$ is the multiplicity space. 
The operators $x^{\pm}(z)$ act on $\F^{(1,0)}_{u_1}\otimes \F^{(1,0)}_{u_2}$ 
as a sum of two vertex operators.
We can factor it into a product of two commuting operators, 
\begin{align*}
(1-q^{\mp1})x^{\pm}(z)\mapsto g^\pm(z)\cdot (u_1u_2)^{\pm1/2}T(z;(u_1/u_2)^{\pm1/2}). 
\end{align*}
The first factor represents the action of $\Ag$ on $\Hc$, 
\begin{align*}
g^\pm(z)=:\exp\left(\pm\sum_{r\neq0}\frac{1-q^{-r}}{2r}
\bigl(a^{(1)}_r+a^{(2)}_r\bigr)z^{-r}\right):\,,
\end{align*}
where $a_r^{(1)}=a_r\otimes \id$, $a_r^{(2)}=\id\otimes a_r$. 
The second factor acts on $\Omega_{u_1,u_2}$, 
\begin{align*}
&T(z;u)=u \Lambda^+(z)+u^{-1}\Lambda^-(z),\\ 
&\Lambda^{\pm}(z)=:\exp\left(\pm\sum_{r\neq0}\frac{1-q^{-r}}{2r}
\bigl(a^{(1)}_r-a^{(2)}_r\bigr)z^{-r}\right):
\end{align*}
and gives 
a free field realization of the deformed Virasoro algebra \cite{SKAO}. 
The space $\Omega_{u_1,u_2}$ is a $q$ analog of 
the Virasoro Verma module with central charge $c=1$
and highest weight $\Delta_\theta=\theta^2$, 
where $q^{2\theta}=u_1/u_2$. 

\subsection{Trivalent vertex operators}
For $N\in\Z$,  we denote by $\F^{(1,N)}_w$ the pullback of $\F^{(1,0)}_w$
by the automorphism $T^N$\kern-2pt.
The following intertwiners of $\Lc$-modules are called 
trivalent vertex operators. 
\begin{align*}
&\Phi=\Phi\Bigl[{{(1,N+1),wx}\atop {(0,1),-w;(1,N),x}}\Bigr]
:\F^{(0,1)}_{-w}\otimes \F^{(1,N)}_x \longrightarrow \F^{(1,N+1)}_{wx}\,,
\\ 
&\Phi^*=\Phi^*\Bigl[{{(1,N),x;(0,1),-w}\atop {(1,N+1),wx}}\Bigr]
:\F^{(1,N+1)}_{wx}\longrightarrow
\F^{(1,N)}_x \otimes  \F^{(0,1)}_{-w}\,.
\end{align*}
Such non-trivial intertwiners exist  
and are unique up to a scalar multiple \cite{AFS}. 
We introduce their components 
$\widehat{\Phi}_\lambda(w),\widehat{\Phi}^*_\lambda(w)$ 
relative to the basis $\{\ket{\lambda}\}$ of $\F^{(0,1)}_{-w}$ as follows. 
\begin{align*}
&\Phi\bigl(\ket{\lambda}\otimes \alpha\bigr)= 
\hat{t}(\lambda,x,w,N)
\cdot 
\widehat{\Phi}_\lambda(w)(\alpha)\quad (\alpha\in \F^{(1,N)}_x)\,,
\\
&\Phi^* (\alpha) =\sum_{\lambda\in\Lambda} 
\hat{t}^*(\lambda,x,w,N)
\cdot \widehat{\Phi}^*_\lambda(w)(\alpha)\otimes \ket{\lambda}
\quad (\alpha\in \F^{(1,N+1)}_{wx})
\,.
\end{align*}
The numerical factors in front are 
\begin{align*}
\hat{t}(\lambda,x,w,N)
=\frac{q^{n(\lambda')}}{c_\lambda}f_\lambda^{-N-1}
\left(\frac{x}{w^N}\right)^{|\lambda|}\,,
\quad 
\hat{t}^*(\lambda,x,w,N)
=\frac{q^{n(\lambda')}}{c_\lambda}f_\lambda^N
\left(\frac{q w^N}{x}\right)^{|\lambda|}
\,,
\end{align*}
where
\begin{align*}
&n(\lambda) =\sum_{\square\in\lambda}\ell_\lambda(\square),\quad 
n(\lambda') =\sum_{\square\in\lambda}a_\lambda(\square), \\
&f_\lambda=(-1)^{|\lambda|}q^{n(\lambda')-n(\lambda)}\,,
\quad
c_\lambda=\prod_{\square\in\lambda}(1-q^{\ell_\lambda(\square)+a_\lambda(\square)+1})\,. 
\end{align*}
We normalize $\widehat{\Phi}(w)$, $\widehat{\Phi}^*(w)$ so that 
 $\widehat{\Phi}_\emptyset(w)\vac=\vac+\cdots$, 
 $\widehat{\Phi}^*_\emptyset(w)\vac=\vac+\cdots$. 

\begin{prop}\cite{AFS}
The operators $\widehat{\Phi}_\lambda(w),\widehat{\Phi}^*_\lambda(w)$
are given by 
\begin{align*}
&\widehat{\Phi}_\lambda(w)=:\widehat{\Phi}^+_\emptyset(w) \eta^+_\lambda(w):,
\quad
\widehat{\Phi}^*_\lambda(w)=:\widehat{\Phi}^-_\emptyset(w) \eta^-_\lambda(w):,
\\
&\eta^\pm_\lambda(w)=:\prod_{i=1}^{\ell(\lambda)}\prod_{j=1}^{\lambda_i} 
\eta^\pm(q^{j-i}):\,,
\end{align*}
where
\begin{align*}
&\widehat{\Phi}^\pm_\emptyset(w)=
:\exp\left(\mp\sum_{n\neq 0}\frac{1}{n}\frac{q^n}{1-q^n}a_{n}w^{-n}\right):
\,,
\\
&\eta^\pm(w)=:\exp\left(\mp\sum_{n\neq0}\frac{1-q^{n}}{n}a_n w^{-n}\right):\,.
\end{align*}
\qed
\end{prop}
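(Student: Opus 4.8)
The plan is to prove Proposition~2.1 by a direct computation, checking that the normal-ordered product $\widehat{\Phi}_\lambda(w)=\mathopen{:}\widehat{\Phi}^+_\emptyset(w)\,\eta^+_\lambda(w)\mathclose{:}$ (and its starred counterpart) indeed supplies an intertwiner of $\Lc$-modules with the stated components. Since $\F^{(0,1)}_{-w}$ has commutative $\Ag$-action and the vectors $\ket{\lambda}$ are obtained from $\ket{\emptyset}$ by applying the currents $x^+(z)$, the strategy is: first establish the $\lambda=\emptyset$ case, then propagate to general $\lambda$ by commuting the intertwining relation through $x^+(z)$.

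First I would verify the vacuum component. The intertwining property $\Phi\circ(\id\otimes\rho^{(1,N)}(g))=\rho^{(1,N+1)}(g)\circ\Phi$ for $g\in\Lc$ reduces, on the vacuum vector $\ket{\emptyset}$, to two families of relations: one for the Heisenberg generators $a_r$ and one for the currents $x^\pm(z)$. The Heisenberg part fixes $\widehat{\Phi}^\pm_\emptyset(w)$ up to scalar: using $a_r\ket{\emptyset}=-\tfrac{(-w)^r}{1-q^r}\ket{\emptyset}$ on the source side and the Heisenberg action on $\F^{(1,0)}$, one solves a first-order difference relation in the mode index whose unique normalized solution is the exponential $\mathopen{:}\exp(\mp\sum_{n\neq0}\tfrac1n\tfrac{q^n}{1-q^n}a_n w^{-n})\mathclose{:}$; the shift in $N$ produced by the automorphism $T^N$ is absorbed into the prefactors $\hat t,\hat t^*$. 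Then one checks the $x^\pm(z)$ intertwining on $\ket{\emptyset}$: since $x^-(z)\ket{\emptyset}=0$ and $x^+(z)\ket{\emptyset}=\delta(u/z)\ket{(1)}$, this both forces compatibility of the two vertex-operator normal-orderings (a contraction-function identity) and pins down the first coefficient of the $\eta^\pm$ operators, confirming $\eta^\pm_{(1)}(w)=\mathopen{:}\eta^\pm(1)\mathclose{:}$.

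Next I would run the induction on $|\lambda|$. Assume the formula holds for all partitions of size $n$. Applying $x^+(z)$ to a vector $\ket{\mu}$ with $|\mu|=n$ produces a sum $\sum_{\square}(\cdots)\ket{\mu+\square}$ over addable boxes, with explicit rational coefficients (the Pieri-type rules of~\cite{FT,AFS}). Pushing this through the already-verified intertwining relation, and using that $x^+(z)$ on the target $\F^{(1,N)}_x$ is the single vertex operator, I would compute the normal-ordered product $\mathopen{:}\widehat{\Phi}^+_\emptyset(w)\eta^+_\mu(w)\mathclose{:}\,x^+(z)$ and re-expand it. The key combinatorial input is that contracting an extra $\eta^+(q^{j-i})$ factor against the vertex operator $x^+(z)$ generates exactly the box-addition coefficients, so that $\eta^+_{\mu+\square}(w)=\mathopen{:}\eta^+_\mu(w)\,\eta^+(q^{j-i})\mathclose{:}$ at $\square=(i,j)$, matched against the ratio of prefactors $\hat t(\mu+\square,\dots)/\hat t(\mu,\dots)$, which evaluates to a product of the same hook-type factors $q^{n(\lambda')}/c_\lambda$ and $f_\lambda^{-N-1}$. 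This is where one needs the identities $n((\mu+\square)')-n(\mu')$, the change in $c_\lambda$, and $f_{\mu+\square}/f_\mu$ in terms of arm and leg lengths at the added box — standard but tedious partition bookkeeping. The starred operator is handled identically, or more economically by applying the duality/adjoint relation $S$ exchanging $x^\pm$, $y^\pm$.

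I expect the main obstacle to be the bookkeeping in the inductive step: one must show that the product form $\prod_{i,j}\eta^+(q^{j-i})$ is consistent, i.e.\ independent of the order in which boxes are added and well-defined despite the $\eta^+(q^{j-i})$ factors not being algebraically independent (two boxes on the same diagonal give coincident arguments, producing potential poles in the contraction functions). Controlling these coincidences — showing that the apparent singularities cancel against zeros coming from $c_\lambda$ and from the Pieri coefficients, exactly as in the $\delta$-function residue $x^+(z)\ket{\emptyset}=\delta(u/z)\ket{(1)}$ — is the delicate point, and it is essentially the content of the direct combinatorial argument deferred to the Appendix. Uniqueness up to scalar, already granted by~\cite{AFS}, then guarantees that the normalized operator we have constructed is \emph{the} intertwiner, completing the proof.
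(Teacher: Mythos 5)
This proposition is imported verbatim from \cite{AFS}: the paper states it with an immediate \verb=\qed= and supplies no proof of its own, so there is no in-paper argument to compare against. Judged on its own terms, your strategy is viable but heavier than it needs to be, and it contains one misattribution. The efficient route exploits the fact that the Heisenberg subalgebra $\Ag$ acts \emph{diagonally on the entire basis} $\{\ket{\lambda}\}$ of $\F^{(0,1)}_{-w}$, not just on $\ket{\emptyset}$: the intertwining relation for $a_r$ then reads $[a_r,\widehat{\Phi}_\lambda(w)]=e_r(\lambda,w)\,\widehat{\Phi}_\lambda(w)$ with $e_r(\lambda,w)$ the explicit eigenvalue of $a_r$ on $\ket{\lambda}$, and since $\F^{(1,N)}_x$ is $\Ag$-irreducible this determines every $\widehat{\Phi}_\lambda(w)$ up to scalar in one stroke; the eigenvalue differs from the vacuum one by $\pm(1-q^{\mp r})\sum_{(i,j)\in\lambda}(q^{j-i}w)^r$-type terms, which is precisely what multiplying by $\eta^\pm_\lambda(w)$ produces. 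Your box-by-box induction through $x^+(z)$ is thus only genuinely needed to pin down the scalar prefactors $\hat t,\hat t^*$ (and to confirm existence), and there the Pieri/contraction bookkeeping you describe is the standard \cite{AFS} computation; your worry about coincident arguments $q^{j-i}$ on a diagonal is legitimate but is handled by the residue structure of the delta functions, exactly as you anticipate. The one error to flag: the Appendix of this paper is \emph{not} where that combinatorial control lives --- it proves the braiding relation of Theorem~\ref{thm-braiding}, which is a different statement; the proof of the present proposition resides entirely in \cite{AFS}.
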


Of particular importance is their normal ordering rule. 
For a pair of partitions $(\lambda,\mu)\in\Lambda^2$, 
we introduce the Nekrasov factor by 
\begin{align*}
&N_{\lambda,\mu}(w)
=\prod_{\square\in\lambda}(1-q^{-\ell_\lambda(\square)-a_{\mu}(\square)-1}w) 
\prod_{\square\in\mu}(1-q^{a_{\lambda}(\square)+\ell_\mu(\square)+1}w) \,.
\end{align*}

\begin{prop}\cite{AFS}
Let $X,Y$ be either $\widehat{\Phi}$ or  $\widehat{\Phi}^*$. 
Then the following normal ordering rule holds. 
\begin{align}
&X_\lambda(z) Y_\mu(w)=
 :X_\lambda(z) Y_\mu(w):\times
\begin{cases}
\bigl(\widehat{G}_q(q w/z) 
N_{\mu,\lambda}(w/z)\bigr)^{-1}
 & \text{for $X=Y$},\\
\widehat{G}_q(q w/z) N_{\mu,\lambda}(w/z)
 & \text{for $X\neq Y$},\\
\end{cases}
\label{normal}
\end{align} 
where $\widehat{G}_q(z)=(z;q,q)_\infty$. 
\qed
\end{prop}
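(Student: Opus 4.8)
The plan is to compute the normal-ordered products directly from the explicit vertex operator formulas in the previous proposition. First I would record that each of $\widehat{\Phi}_\lambda(z)$ and $\widehat{\Phi}^*_\mu(w)$ is a normally ordered exponential of the form $:\exp\bigl(\sum_{n\neq 0} c_n^{\pm}(\lambda) a_n z^{-n}\bigr):$, where the coefficients $c_n^{\pm}(\lambda)$ collect the contributions of $\widehat{\Phi}^{\pm}_\emptyset$ and of the product $\eta^{\pm}_\lambda$; explicitly the $\eta^{\pm}$-part contributes a factor $\sum_{(i,j)\in\lambda} q^{(j-i)n}$ with the sign $\mp$ and weight $(1-q^n)/n$ absorbed. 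The standard CCR identity $e^A e^B = {:}e^A e^B{:}\, e^{[A^{+},B^{-}]}$, applied with $[a_n,a_{-n}]=n c_1 = n$ on a level $(1,0)$ module, reduces the whole statement to evaluating a single scalar: the exponential of $\sum_{n\geq 1}\frac{1}{n}\,\alpha_n(z)\,\beta_{-n}(w)$ built from the positive modes of $X_\lambda(z)$ and the negative modes of $Y_\mu(w)$.

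Second, I would carry out that scalar sum. Separating the $\widehat{\Phi}^{\pm}_\emptyset$ pieces from the $\eta^{\pm}_\lambda$ pieces, one gets four groups of terms. The $\emptyset$-$\emptyset$ cross term sums to $\exp\bigl(-\sum_{n\geq 1}\frac{1}{n}\frac{q^n}{(1-q^n)(1-q^{-n})}(w/z)^n\bigr)$ type contributions, which one recognizes — after using $\tfrac{1}{(1-q^n)(1-q^{-n})} = -\tfrac{q^n}{(1-q^n)^2}$ and the generating identity $\sum_{n\geq1}\frac{1}{n}\frac{q^n x^n}{(1-q^n)^2}=\ldots$ — as $\widehat{G}_q(qw/z)^{\mp1}$ with $\widehat G_q(x)=(x;q,q)_\infty$. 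The $\emptyset$-$\eta$ and $\eta$-$\emptyset$ mixed terms produce finite products over the boxes of $\lambda$ and of $\mu$ of elementary factors $(1-q^{?}w/z)^{\pm1}$, which telescope; finally the $\eta$-$\eta$ term is a double sum $\sum_{\square\in\lambda}\sum_{\blacksquare\in\mu}$ over arm/leg data that, when regrouped, is precisely the definition of $N_{\mu,\lambda}(w/z)$ (or its inverse, according to whether $X=Y$ or $X\neq Y$, the sign flip coming from whether the two operators carry the same or opposite $\mp$ prefactor in their exponents). The combinatorial identity identifying that double sum with $\prod_{\square\in\mu}(1-q^{-\ell_\mu-a_\lambda-1}w/z)\prod_{\square\in\lambda}(1-q^{a_\mu+\ell_\lambda+1}w/z)$ — equivalently $N_{\mu,\lambda}(w/z)$ after matching conventions — is the standard ``Nekrasov factor from vertex operators'' computation (cf. \cite{FT,AFS}).

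I would expect the main obstacle to be bookkeeping rather than a genuine difficulty: keeping track of the exact powers of $q$ attached to each box of $\lambda$ and of $\mu$ (the $q^{j-i}$ shifts from $\eta^{\pm}_\lambda$, combined with the $\tfrac{q^n}{1-q^n}$ versus $1-q^n$ weights) so that the finite products emerging from the mixed terms combine with the $\widehat G_q$ prefactor to give exactly $\widehat G_q(qw/z)$ and not, say, $\widehat G_q(w/z)$ or $\widehat G_q(q^2 w/z)$, and so that the arm/leg exponents in the double sum land on the normalization used in $N_{\lambda,\mu}$. The cleanest way to control this is to first establish the single-box identities (the cases $\lambda=(1)$, $\mu=\emptyset$ and $\lambda=\emptyset$, $\mu=(1)$), and then prove the general case by the standard ``adding a box'' recursion for Nekrasov factors, namely $N_{\lambda+\square,\mu}/N_{\lambda,\mu}$ has an explicit product form over the hook of $\square$; the vertex-operator side satisfies the matching recursion because $\eta^{\pm}_{\lambda+\square}=\eta^{\pm}_\lambda\cdot\eta^{\pm}(q^{j-i})$ up to normal ordering. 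For the sake of brevity I would present the single-box computation in detail and relegate the inductive step — or the direct but lengthy combinatorial summation — to the Appendix, where the braiding relation is proved by similar combinatorial means.
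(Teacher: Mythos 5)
Your outline is correct, but note that the paper itself offers no proof of this proposition: it is quoted from \cite{AFS} with a \textit{qed} mark, so there is nothing in the text to compare against. What you describe — writing $X_\lambda(z)$, $Y_\mu(w)$ as normally ordered exponentials in the $a_n$, applying the Wick/BCH contraction with $[a_n,a_{-n}]=n$, recognizing the $\emptyset$--$\emptyset$ contraction as $\widehat G_q(qw/z)^{\mp1}$ via $\tfrac{1}{(1-q^n)(1-q^{-n})}=-\tfrac{q^n}{(1-q^n)^2}$ and $\log(x;q,q)_\infty=-\sum_{n\ge1}\tfrac{x^n}{n(1-q^n)^2}$, and telescoping the mixed and box--box contractions into $N_{\mu,\lambda}(w/z)^{\pm1}$ — is exactly the standard computation carried out in \cite{AFS} (see also \cite{FT}), and the sign of the overall exponent is indeed governed by whether the two operators carry the same or opposite sign in their exponents, i.e.\ whether $X=Y$. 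The single-box-plus-recursion variant you propose as a fallback is also sound, so the only work left is the bookkeeping you already flag.
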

It is useful to note the relations
\begin{align}
&N_{\lambda,\mu}(w)=N_{\mu,\lambda}(w^{-1}) 
w^{|\lambda|+|\mu|}\frac{f_\lambda}{f_\mu},
\label{rule1}\\
&\left(\frac{q^{n(\lambda')}}{c_\lambda}\right)^2=\frac{f_\lambda q^{-|\lambda|}}{N_{\lambda,\lambda}(1)}\,.
\label{rule2}
\end{align}

\subsection{Chiral primary field}
In what follows we set 
\begin{align*}
\V_{\theta,w}:=\F^{(0,1)}_{-q^{-\theta}w}\otimes \F^{(0,1)}_{-q^{\theta}w}\,. 
\end{align*}
We fix its basis 
$\ket{\bla}=\ket{\lambda_+}\otimes\ket{\lambda_-}$
labeled by a pair $\bla=(\lambda_+,\lambda_-)$ of partitions. 

Consider the compositions of trivalent vertex operators 
\begin{align*}
&\Psi_1:\F^{(1,0)}_{q^{-2\theta_2}/x}
\longrightarrow 
\F^{(1,-1)}_{q^{-\theta_3-\theta_2}/(xw)}\otimes 
\F^{(0,1)}_{-q^{\theta_3-\theta_2}w}
\longrightarrow 
\F^{(1,-2)}_{1/(xw^2)}\otimes \F^{(0,1)}_{-q^{-\theta_3-\theta_2}w}
\otimes \F^{(0,1)}_{-q^{\theta_3-\theta_2}w}\,,
\\
&\Psi_2:
\F^{(0,1)}_{-q^{-\theta_1}w}\otimes \F^{(0,1)}_{-q^{\theta_1}w}\otimes \F^{(1,-2)}_{1/(xw^2)}
\longrightarrow
\F^{(0,1)}_{-q^{-\theta_1}w}\otimes \F^{(1,-1)}_{q^{\theta_1}/(xw)}
\longrightarrow 
\F^{(1,0)}_{1/x}\,.
\end{align*}
By composing them we obtain an intertwiner of $\Lc$-modules
\begin{align}
\V_{\theta_1,w}\otimes \F^{(1,0)}_{q^{-2\theta_2}/x}
\overset{\id\otimes\Psi_1}{\longrightarrow}
\V_{\theta_1,w}\otimes \F^{(1,-2)}_{1/(xw^2)}
\otimes \V_{\theta_3,q^{-\theta_2}w}
\overset{\Psi_2\otimes\id}{\longrightarrow}
\F^{(1,0)}_{1/x}\otimes\V_{\theta_3,q^{-\theta_2}w}\,.
\label{comp}
\end{align}
Taking further the vacuum-to-vacuum matrix coefficient of \eqref{comp}, 
we obtain a map 
\begin{align}
&V\Bigl({{\phantom{h_1} \theta_2 \phantom{h_3}}
\atop{\theta_3 \phantom{h_2} \theta_1}};w,x\Bigr)
:\V_{\theta_1,w}\longrightarrow \V_{\theta_3,q^{-\theta_2}w}\,.
\label{primary}
\end{align}
The matrix coefficients of \eqref{primary} are given by 
\begin{align}
&\langle{\bmu}\Bigl|
V\Bigl({{\phantom{h_1} \theta_2 \phantom{h_3}}
\atop{\theta_3 \phantom{h_2} \theta_1}};w,x\Bigr)
\Bigr|\bla\rangle 
\label{V-matelt}\\
&=\mathcal{C}\, 
\frac{q^{n(\lambda'_+)}}{c_{\lambda_+}}\frac{q^{n(\lambda'_-)}}{c_{\lambda_-}}
\frac{q^{n(\mu'_+)}}{c_{\mu_+}}\frac{q^{n(\mu'_-)}}{c_{\mu_-}}
\frac{f_{\lambda_-}}{f_{\mu_+}^2f_{\mu_-}}q^{2\theta_1|\lambda_-|
+(2(\theta_3+\theta_2)+1)|\mu_+|+(2\theta_2+1)|\mu_-|}\,
\nn\\
&\times 
x^{|\bmu|-|\bla|}\cdot
\bra{\mathrm{vac}}
\widehat{\Phi}_{\lambda_+}(q^{-\theta_1}w)
\widehat{\Phi}_{\lambda_-}(q^{\theta_1}w)
\widehat{\Phi}^*_{\mu_+}(q^{-\theta_3-\theta_2}w)
\widehat{\Phi}^*_{\mu_-}(q^{\theta_3-\theta_2}w)\ket{\mathrm{vac}}\,,
\nn
\end{align}
where we set $|\bla|=|\lambda_+|+|\lambda_-|$ for 
$\bla=(\lambda_+,\lambda_-)\in\Lambda^2$, and 
$\mathcal{C}$ is a scalar factor
independent of $\bla,\bmu\in\Lambda^2$. 

We choose the normalization 
\begin{align}
&\langle  
V\Bigl({{\phantom{h_3} \theta_2 \phantom{h_1}}
\atop{\theta_3 \phantom{h_2} \theta_1}};w,x\Bigr)
\rangle
=
\N
\Bigl({{\phantom{h_3} \theta_2 \phantom{h_1}}
\atop{\theta_3 \phantom{h_2} \theta_1}}\Bigr)
q^{2\theta_2\theta_3^2}x^{\theta_3^2-\theta_2^2-\theta_1^2}\,,
\nn
\\
&\N
\Bigl({{\phantom{h_3} \theta_2 \phantom{h_1}}
\atop{\theta_3 \phantom{h_2} \theta_1}}\Bigr)
=
\frac{\prod_{\epsilon,\epsilon'=\pm}
G_q(1+\epsilon\theta_3-\theta_2-\epsilon'\theta_1)}
{G_q(1+2\theta_3)G_q(1-2\theta_1)}\,.
\label{Nfac}
\end{align}
Here and after, 
$\langle\cdots\rangle$ stands for the matrix coefficient 
$\langle(\emptyset,\emptyset)|\cdots|(\emptyset,\emptyset\rangle)$.  

We regard \eqref{primary}
as a $q$ analog of the chiral primary field of conformal dimension 
$\Delta_{\theta_2}=\theta_2^2$.

\subsection{Conformal block function}

The $(m+2)$ point conformal block function is the expectation value of a product of 
chiral primary fields, 
\begin{align*}
&\F\Bigl(
{
{\phantom{000} \theta_m\phantom{h000}
\theta_{m-1}\phantom{h0}\phantom{h0}
\cdots\phantom{h0}
\phantom{h0}\theta_1\phantom{h0}}
\atop
{\theta_{m+1}\phantom{h0}\sigma_{m-1}
\phantom{h0}\sigma_{m-2}\phantom{h0}
\cdots
\phantom{h_0}
\sigma_{1}\phantom{h0}\theta_{0}}
}
;x_m,\ldots,x_{1}
\Bigr) 
\\
&=\langle
V\Bigl({\kern-1pt\theta_m
\atop{\theta_{m+1} \phantom{h_2} \sigma_{m-1}}};w_m,x_m\Bigr)
V\Bigl({{\theta_{m-1}\phantom{h}}
\atop{\sigma_{m-1} \phantom{h_2} \sigma_{m-2}}};w_{m-1},x_{m-1}\Bigr)
\cdots
V\Bigl({{\theta_1 \phantom{h}}
\atop{\sigma_1 \phantom{h} \theta_{0}}};w_{1},x_{1}\Bigr)
\rangle
\end{align*}
where $w_p=q^{-\theta_{p-1}-\cdots-\theta_1}w_1$. 
The right hand side is actually independent of $w_1$. 

Using formula \eqref{V-matelt}, applying the normal ordering rule \eqref{normal} and further using
\eqref{rule1}, \eqref{rule2} for simplification, 
we obtain the following explicit expression. 
\begin{align}
&\F\Bigl(
{
{\phantom{000} \theta_m\phantom{h000}
\theta_{m-1}\phantom{h0}\phantom{h0}
\cdots\phantom{h0}
\phantom{h0}\theta_1\phantom{h0}}
\atop
{\theta_{m+1}\phantom{h0}\sigma_{m-1}
\phantom{h0}\sigma_{m-2}\phantom{h0}
\cdots
\phantom{h_0}
\sigma_{1}\phantom{h0}\theta_{0}}
}
;x_m,\ldots,x_{1}
\Bigr) 
\label{block}\\
&=
\prod_{p=1}^{m}\N\Bigl({\kern-1pt\theta_p\atop{\sigma_{p} \phantom{h_2} \sigma_{p-1}}}
\Bigr)q^{2\theta_p\sigma_{p}^2}
\cdot 
\prod_{p=1}^{m}
x_p^{\sigma_{p}^2-\theta_p^2-\sigma_{p-1}^2}
\nn\\
&\times \sum_{\bla^{(1)},\ldots,\bla^{(m-1)}}
\prod_{p=1}^{m-1}\Bigl(
\frac{q^{2\theta_{p}}x_p}{x_{p+1}}
\Bigr)^{|\bla^{(p)}|}\cdot
\frac{\prod_{p=1}^m \prod_{\epsilon,\epsilon'=\pm}
N_{\lambda^{(p)}_\epsilon,\lambda^{(p-1)}_{\epsilon'}}(q^{\epsilon \sigma_{p}-\theta_p-\epsilon'\sigma_{p-1}})}
{\prod_{p=1}^{m-1}\prod_{\epsilon,\epsilon'=\pm}
N_{\lambda^{(p)}_{\epsilon},\lambda^{(p)}_{\epsilon'}}(q^{\epsilon\sigma_p-\epsilon'\sigma_p})
}\,.
\nn
\end{align}
We have set
$\sigma_0=\theta_0$, $\sigma_m=\theta_{m+1}$, 
$\bla^{(0)}=\bla^{(m)}=(\emptyset,\emptyset)$,
and 
the sum is taken over all 
$\bla^{(p)}=(\lambda^{(p)}_+,\lambda^{(p)}_-)\in\Lambda^2$, $p=1,\ldots,m-1$. 

Formula \eqref{block} is the $q$ version of the AGT formula 
for conformal blocks \cite{AY} specialized to the case $t=q$. 
The derivation given above 
from the quantum algebra viewpoint is due to \cite{AFS}.

\begin{figure}
\setlength{\unitlength}{1mm}
\begin{picture}(120,60)(20,0)
\put(20,11){$\theta_{m+1}$}
\put(135,11){$\theta_{0}$}
\put(42,7){$(x_m)$}
\put(62,7){$(x_{m-1})$}
\put(77,7){$\cdots$}
\put(92,7){$(x_{2})$}
\put(112,7){$(x_1)$}
\put(52,15){$\sigma_{m-1}$}
\put(72,15){$\sigma_{m-2}$ \quad $\cdots$}
\put(102,15){$\sigma_{1}$}
\put(42,38){$\theta_m$}
\put(62,38){$\theta_{m-1}$}
\put(92,38){$\theta_{2}$}
\put(112,38){$\theta_1$}
\put(45,12){\thicklines\line(0,1){20}}
\put(45,27){\thicklines\vector(0,-1){5}}
\put(65,12){\thicklines\line(0,1){20}}
\put(65,27){\thicklines\vector(0,-1){5}}
\put(95,12){\thicklines\line(0,1){20}}
\put(95,27){\thicklines\vector(0,-1){5}}
\put(115,12){\thicklines\line(0,1){20}}
\put(115,27){\thicklines\vector(0,-1){5}}
\put(42,12){\thicklines\vector(-1,0){5}}
\put(58,12){\thicklines\vector(-1,0){5}}
\put(88,12){\thicklines\vector(-1,0){5}}
\put(108,12){\thicklines\vector(-1,0){5}}
\put(127,12){\thicklines\vector(-1,0){5}}
\put(30,12){\thicklines\line(1,0){100}}
\end{picture}

\caption{$q$-conformal block function}
\end{figure}
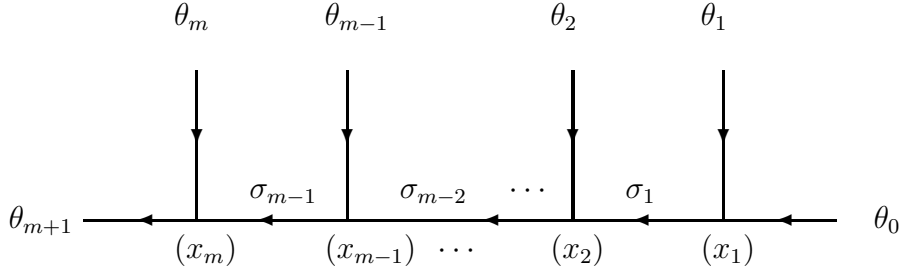

\subsection{Degenerate field and braiding}
The normalization factor \eqref{Nfac}
has a zero 
when 
$\theta_2=1/2$ 
and 
$\theta_1-\theta_3=\pm 1/2$. 
In this case we redefine the primary field by 
\begin{align*}
\N'\Bigl({{\! \half \phantom{h_3}}
\atop{\theta \phantom{h} \ \ \theta\pm \half}}\Bigr)
=
\lim_{\e\to 0}\,\frac{1}{G_q(\e)}\,
\N\Bigl({{\phantom{h} \half-\e \phantom{hhh}}
\atop{\theta \phantom{h_2} \ \ \theta\pm \half}}\Bigr)\,,
\\
V'\Bigl({{\phantom{h} \half \phantom{hhh}}
\atop{\theta \phantom{h_2} \theta\pm \half}};w,x\Bigr)
=
\lim_{\e\to 0}\,\frac{1}{G_q(\e)}\,
V\Bigl({{\phantom{h} \half-\e \phantom{hhh}}
\atop{\theta \phantom{h_2} \ \ \theta\pm \half}};w,x\Bigr)\,. 
\end{align*}

We consider the conformal block function when one of the primary fields is 
$V'\Bigl(\cdots\Bigr)$. 
We keep using the same letter $\F\left(\cdots\right)$ to denote them. 
By choosing $m=2$ and 
specializing parameters in \eqref{block}, the
four point conformal blocks can be evaluated in terms of Heine's basic 
hypergeometric series 
\begin{align*}
F\left({{\alpha,\beta}\atop{\gamma}};x\right) 
=\frac{\Gamma_q(\alpha)\Gamma_q(\beta)}{\Gamma_q(\gamma)}
{}_2\kern-1pt\phi_1\left({{q^\alpha,q^\beta}\atop{q^\gamma}};q,x\right)\, 
\end{align*}
as follows. 
\begin{align*}
&\F\Bigl({{\kern-3pt \half\phantom{hhhhh} \theta_1\phantom{hh}}
\atop
{\theta_\infty\phantom{h0}\theta_\infty+\textstyle{\frac{\e'}{2}}
\phantom{h0}\theta_0}}
;x_1,x_2\Bigr) 
= \mathcal{N}\,
q^{\theta_\infty^2}\left(\frac{q^{2\theta_1}x_2}{x_1}\right)^{\e'\theta_\infty+1/4}
\\
&\qquad \times 
F\left({{\half+\e'\theta_\infty-\theta_1+\theta_0,
\half+\e'\theta_\infty-\theta_1-\theta_0}
\atop{1+2\e'\theta_\infty}};\frac{q^{2\theta_1}x_2}{x_1}\right) \,,
\\
&\F\Bigl({{ \theta_1\phantom{hhh} \half \phantom{h}}
\atop
{\theta_\infty\phantom{h0}\theta_0+\textstyle{\frac{\e}{2}}
\phantom{h0}\theta_0}}
;x_2,x_1\Bigr) 
=\mathcal{N}\, q^{\theta_0^2}\left(\frac{qx_1}{x_2}\right)^{\e\theta_0+1/4}
\\
&\qquad\times 
F\left({{\half+\theta_\infty-\theta_1+\e\theta_0,
\half-\theta_\infty-\theta_1+\e\theta_0}
\atop{1+2\e\theta_0}};\frac{qx_1}{x_2}\right) \,.
\end{align*}
Here we have set 
\begin{align*}
\mathcal{N}=q^{2\theta_1\theta_\infty^2} 
x_1^{-1/4}x_2^{\theta_\infty^2-\theta_1^2-\theta_0^2}
\frac{\prod_{\mu,\mu'=\pm}
G_q\Bigl(\frac{1}{2}+\mu\theta_\infty-\theta_1+\mu'\theta_0\Bigr)}
{G_q(1+2\theta_\infty)G_q(1-2\theta_0)}. 
\end{align*}

The following braiding relation holds. 
\begin{thm}\label{thm-braiding}
We have 
\begin{align}\label{braiding-rel}
&
V'\Bigl({{\kern-3pt\half\phantom{0}}\atop {\theta_\infty\phantom{h0}\theta_\infty+\textstyle{\frac{\e'}{2}}}};x_1\Bigr)
V\Bigl({\phantom{00}{\theta_1}\atop {\theta_\infty+\textstyle{\frac{\e'}{2}}
\phantom{h0}\theta_0}};x_2\Bigr)
\\
&=
\sum_{\e=\pm}
V\Bigl({{\theta_1\phantom{00}}\atop {\theta_\infty\phantom{h0}\theta_0+\textstyle{\frac{\e}{2}}}}
;x_2\Bigr)
V'\Bigl({{\phantom{0}\half}\atop {\theta_0+\textstyle{\frac{\e}{2}}\phantom{h0}\theta_0}};x_1\Bigr)
\B_{\e,\e'}\left[\begin{matrix}
		  \theta_1 & \half \\
                  \theta_\infty & \theta_0\\
		 \end{matrix}
\Bigl| \frac{x_2}{x_1}\right]\cdot
q^{\theta_1^2-\theta_1/2}\left(\frac{x_2}{x_1}\right)^{\theta_1}\,,
\nn
\end{align}
where the braiding matrix is given by 
\begin{align}
&\B_{\e,\e'}\left[\begin{matrix}
		  \theta_1 & \half \\
                  \theta_\infty & \theta_0\\
		 \end{matrix}
\Bigl| x\right]
=-\e 
\frac{\vartheta\bigl(\frac{1}{2}+\e'\theta_\infty+\theta_1-\e \theta_0\bigr)}
{\vartheta\bigl(2\theta_0\bigr)}
\frac{\vartheta\bigl(\frac{1}{2}+\e'\theta_\infty+\theta_1+\e \theta_0+u\bigr)}
{\vartheta\bigl(2\theta_1+u\bigr)}
\label{braid}
\end{align}
with $x=q^u$.
\qed
\end{thm}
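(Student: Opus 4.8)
The plan is to reduce the operator identity \eqref{braiding-rel} to an identity among explicit $q$-series, and then to recognize the latter as the classical connection formula for Heine's ${}_2\phi_1$ relating its expansions at $0$ and $\infty$, whose connection coefficients are the theta-quotients appearing in \eqref{braid}.

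First I would observe that both sides of \eqref{braiding-rel} are, for fixed $x_1,x_2$, operators from one module of type $\V_{\theta_0,\bullet}$ to one of type $\V_{\theta_\infty,\bullet}$ (with all spectral parameters matched), hence are determined by their matrix coefficients $\langle\bmu|\cdots|\bla\rangle$ in the fixed eigenbases $\{\ket{\bla}\}$. Using \eqref{V-matelt}, which expresses the matrix element of a single primary field $V$ as a vacuum expectation of four bosonic operators $\widehat\Phi_\lambda,\widehat\Phi^*_\mu$, the compositions $V'V$ on the left and $VV'$ on the right become vacuum expectations of eight such operators; the normal ordering rule \eqref{normal}, together with \eqref{rule1}--\eqref{rule2}, then turns every matrix coefficient into a finite prefactor times a sum over internal partitions of products of Nekrasov factors, exactly as in the derivation of \eqref{block}. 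In particular, the vacuum-to-vacuum coefficient of each side is one of the four-point blocks already evaluated just before Theorem~\ref{thm-braiding} in terms of ${}_2\phi_1$.

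Next I would carry out the degenerate limit $\theta_2=\tfrac12-\e$, $\e\to0$ that defines $V'$. The zero of \eqref{Nfac} at $\theta_2=\tfrac12$, $\theta_1-\theta_3=\pm\tfrac12$ is cancelled by the factor $1/G_q(\e)$; on the level of $q$-series this limit simplifies the partition sum attached to the degenerate vertex and leaves exactly the ${}_2\phi_1$-series recorded before the theorem, in the variable $z=q^{2\theta_1}x_2/x_1$ on the left-hand side of \eqref{braiding-rel} and $z'=qx_1/x_2$ on the right. Since $z'=q^{1+2\theta_1}/z$, matching the two sides amounts to the statement that a ${}_2\phi_1$ expanded at $z=0$ decomposes into the two ${}_2\phi_1$-solutions, expanded at $z=\infty$, of the same second-order $q$-difference equation. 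The connection coefficients of this decomposition are quasi-constants --- elliptic functions of $x_2/x_1$ rather than genuine constants --- given by the standard theta-quotients (Gasper--Rahman); comparing the parameters $\tfrac12+\e'\theta_\infty-\theta_1\pm\theta_0$, $1+2\e'\theta_\infty$ of Heine's series against $\B_{\e,\e'}$ reproduces \eqref{braid}, including the sign $-\e$ and the prefactor $q^{\theta_1^2-\theta_1/2}(x_2/x_1)^{\theta_1}$.

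It remains to upgrade the vacuum-to-vacuum identity to the full operator identity. One option is to use that the four-point block with a degenerate insertion satisfies a second-order $q$-difference equation in $x_2/x_1$: sandwiched between arbitrary states, both sides of \eqref{braiding-rel} solve a version of this equation in each internal channel, so that agreement of the vacuum coefficient already forces agreement of all matrix coefficients. The alternative, which is the route taken in the Appendix, is to verify the combinatorial identity between the two multi-partition sums directly, the general $(\bla,\bmu)$ case being a dressed version of the $q$-Gauss/Heine manipulation used for the vacuum coefficient. I expect this last point to be the main obstacle: one has to keep precise track of the cancellation between the zeros of the Nekrasov factors and the pole of $1/G_q(\e)$ in the degenerate limit, and then check that the theta-function connection coefficients determined by the leading term propagate unchanged to every matrix element rather than acquiring $(\bla,\bmu)$-dependent corrections.
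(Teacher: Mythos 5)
Your reduction of the operator identity to matrix coefficients, and your identification of the vacuum-to-vacuum coefficient with the connection formula for Heine's ${}_2\phi_1$, both match the paper's starting point exactly (the paper even notes that $N_{\lambda,\mu}(1)=\delta_{\lambda,\mu}N_{\lambda,\lambda}(1)$ collapses the internal double sum to a single sum over one partition, which is the content of Proposition \ref{prop-braiding-relation}). The gap is in the step you yourself flag as the main obstacle: neither of your two proposed routes actually upgrades the vacuum identity to general matrix elements. Your option (a) --- that both sides satisfy a second-order $q$-difference equation in $x_2/x_1$, so the vacuum coefficient ``forces'' all others --- does not follow: each matrix element $\langle\bmu|\cdots|\bla\rangle$ is a separate pair of solutions, and its connection coefficients could a priori differ from the vacuum ones by $(\bla,\bmu)$-dependent elliptic multipliers; the paper explicitly states that it is \emph{not} clear to the authors how to deduce the general case from the vacuum case plus the intertwining relation, which is precisely the argument you are gesturing at. Your option (b) is the correct route but is stated only as ``verify the combinatorial identity directly,'' with no method supplied.

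The missing idea is the inductive mechanism of the Appendix. Two ingredients are needed. First, the symmetry $N_{\lambda',\mu'}(u)=N_{\mu,\lambda}(u)$ (Lemma \ref{lem-Nfactor-symmetry}) gives $X^{\e}_{\lambda,\mu,\alpha,\beta}(\theta_\infty,\theta_1,\sigma;x_1,x_2)=Y^{\e}_{\beta',\alpha',\mu',\lambda'}(\sigma,\theta_1,\theta_\infty;(qx_1)^{-1},(q^{2\theta_1}x_2)^{-1})$, which together with the sign symmetries lets one rotate any of the four partitions into the distinguished slot. Second, and crucially, Lemma \ref{lem-lambda} exhibits both $X^{\e}_{\lambda,\ldots}$ and $Y^{\e}_{\lambda,\ldots}$ as the image of the corresponding quantities for $\bar\lambda$ (with $\theta_1\to\theta_1+\half$, $\sigma\to\sigma+\half$) under \emph{one and the same} first-order $q$-difference operator $q^{-\ell(\lambda)+\theta_1+\sigma+3/4}T_1-1$ in $x_1$, up to a common prefactor $C$; since the braiding matrix is invariant under $x\mapsto qx$ by \eqref{braid-q-per}, applying this operator to the already-established identity for $\bar\lambda$ yields the identity for $\lambda$, and induction on $|\lambda|+|\mu|+|\alpha|+|\beta|$ terminates at the empty case, i.e.\ the ${}_2\phi_1$ connection formula. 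This recursion --- not a uniqueness argument for the $q$-difference equation --- is what guarantees that the theta-function connection coefficients propagate to every matrix element without acquiring state-dependent corrections, and it is the substantive content your proposal leaves unproved.
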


For the vacuum matrix element, 
the relation \eqref{braiding-rel} is a consequence of
the known connection formula for basic hypergeometric functions.
At this writing it is not clear to us whether the general case
can be deduced from this and the intertwining relation.
In Appendix we give a direct combinatorial proof of 
the braiding relation \eqref{braiding-rel}.

The braiding matrix \eqref{braid} has the following properties. 
\begin{align}
&\B_{\e,\e'}\left[\begin{matrix}
		  \theta_1 & \half \\
                  \theta_\infty & \theta_0\\
		 \end{matrix}
\Bigl| q x\right]
=\B_{\e,\e'}\left[\begin{matrix}
		  \theta_1 & \half \\
                  \theta_\infty & \theta_0\\
		 \end{matrix}
\Bigl| x\right]\,,
\label{braid-q-per}\\
&\B\left[\begin{matrix}
  \theta_1 & \half \\
  \theta_\infty & \theta_0\\
 \end{matrix}
\Bigl| x\right]^{-1}
=
\B\left[\begin{matrix}
  \theta_1 & \half \\
  \theta_0 & \theta_\infty\\
 \end{matrix}
\Bigl| q^{-2\theta_1}x^{-1}\right],
\label{braid-inv}
\\
&
\det \B\left[\begin{matrix}
		  \theta_1 & \half \\
                  \theta_\infty & \theta_0\\
		 \end{matrix}
\Bigl| x\right]
 =\frac{\vartheta\bigl(2\theta_\infty\bigr)}{\vartheta\bigl(2\theta_0\bigr)}
\frac{\vartheta\bigl(u\bigr)}{\vartheta\bigl(u+2\theta_1\bigr)}\,,
\nn
\\
&\B\left[
\begin{matrix}
  \theta_1 & \half \\
  \theta_\infty & \theta_0\\
\end{matrix}
\Bigl| x\right]
\quad
\text{is $1$-periodic in $\theta_0$,$\theta_1$,$\theta_\infty$}\,.
\label{braid-period}
\end{align}

\section{$q$-Painlev{\'e VI} equation}

\subsection{Riemann problem}

This paper deals with a nonlinear nonautonomous $q$-difference equation
called $q$-Painlev\'e VI ($q$PVI) equation.
It is derived from deformation
theory of linear $q$-difference equations as a discrete analog of isomonodromic
deformation \cite{JS}.

A remark is in order concerning the name of the equation.
Discrete Painlev\'e equations are classified by rational surface
theory and sometimes called by the name of the surface \cite{S1}.
From that viewpoint, $q$PVI  is an equation corresponding to the surface of
type $A_3^{(1)}$.
Besides, this surface has 
symmetry by the Weyl group $D_5^{(1)}$.
Accordingly $q$PVI is also 
referred to as the $q$-Painlev{\' e} equation 
of type $A_3^{(1)}$ , or the  $q$-Painlev{\' e} equation 
with $D_5^{(1)}$-symmetry.

Regarding discrete Painlev\'e equations,
pioneering researches by Ramani, Grammaticos, and their coworkers are
well-known.
The  $q$PVI equation, 
in the special case when an extra symmetry condition is fulfilled, 
has been studied earlier by them under the name of
discrete Painlev\'e III equation \cite{RGH}.

In this section we review the content of \cite{JS} concerning 
the generalized Riemann problem associated with the $q$PVI equation. 

Let $\theta_0,\theta_t,\theta_1,\theta_\infty\in\C$. 
We set 
\begin{align*}
&R_1=\max(1,|q^{-2\theta_1}|,|tq^{-2\theta_1}|,|tq^{-2\theta_1-2\theta_t}|), \\ 
&R_2=\min(|q^{-1}|,|q^{-2\theta_1-1}|,
|tq^{-2\theta_1-1}|,|tq^{-2\theta_1-2\theta_t-1}|),
\end{align*}
and assume that $R_1<R_2$. 
Fix also $\sigma\in\C$ and $s\in\C^\times$.

Suppose we are given $2\times 2$ matrices 
$ Y^\infty(x,t),Y^{0t}(x,t),Y^0(x,t)$ with the following properties.
\begin{enumerate}
\item
These matrices are holomorphic in the domains
\begin{align}
Y^\infty(x,t): R_1<|x|\,,
\quad  
Y^{0t}(x,t): R_1<|x|<R_2\,,
\quad  
Y^0(x,t): |x|<R_2\,.
\label{domain}
\end{align}
\item They are related to each other by 
\begin{align}
&Y^\infty(x,t)=Y^{0t}(x,t)\B_1(x),
\quad 
Y^{0t}(x,t)=Y^0(x,t)\B_2(x), 
\label{connection1}
\\
&\B_1(x)
=\B\left[\begin{matrix}
		   \theta_1  &\half\\
                 \theta_\infty+\half & \sigma \\
		 \end{matrix}
\Bigl|q^{-2\theta_1} x^{-1} \right]\,,
\quad 
\B_2(x)
=
\B\left[\begin{matrix}
		 \theta_t& \half  \\
                   \sigma+\half & \theta_0\\
		 \end{matrix}
\Bigl|q^{-2\theta_t-2\theta_1}\frac{t}{x}\right]
\begin{pmatrix}
 0 & s \\ 1 & 0 \\
\end{pmatrix}
\,,
\label{connection2}
\end{align}
\item At $x\to \infty$ or $x\to 0$ they have the behavior
\begin{align}
&Y^\infty(x,t)=\left(I+Y_1(t) x^{-1}+O(x^{-2})\right) 
\begin{pmatrix}
 x^{-\theta_\infty} & 0 \\ 0 & x^{\theta_\infty}\\
\end{pmatrix}
\quad (x\to \infty),
\label{Yinf}\\
&Y^0(x,t)=G(t)(I+O(x))
\begin{pmatrix}
 x^{\theta_0} & 0 \\ 0 & x^{-\theta_0}\\
\end{pmatrix}\times x^{-\theta_t-\theta_1}
\quad (x\to 0).
\label{Y0}
\end{align}
\end{enumerate}

It follows that $Y(x,t)=Y^\infty(x,t)$ is a solution of 
linear $q$-difference equations of the form 
\begin{align}
&Y(qx,t)=A(x,t)Y(x,t)\,,\quad 
A(x,t)=\frac{A_2 x^2+A_1(t)x+A_0(t)}{(x-q^{-1})(x-tq^{-2\theta_1-1})}\,,
\label{A}
\\
&Y(x,qt)=B(x,t)Y(x,t)\,,
\quad B(x,t)=\frac{xI+B_0(t)}{x-tq^{-2\theta_t-2\theta_1}}\,.
\label{B} 
\end{align}
We have 
\begin{align*}
&A_2=\mathrm{diag}(q^{-\theta_\infty},q^{\theta_\infty}),
\quad
A_0(t)=tq^{-3\theta_1-\theta_t-2}
G(t)\cdot \mathrm{diag}(q^{\theta_0},q^{-\theta_0})
\cdot G(t)^{-1},
\\
&\det A(x,t)=\frac{(x-q^{-2\theta_1-1})(x-t q^{-2\theta_t-2\theta_1-1})}
{(x-q^{-1})(x-tq^{-2\theta_1-1})}\,.
\end{align*}
Though the argument is quite standard,   
we outline the derivation for completeness.  

We take the determinant of the relations \eqref{connection1}, \eqref{connection2} 
and rewrite them as 
\begin{align*}
&
\frac{(1/x,tq^{-2\theta_1}/x;q)_\infty}
{(q^{-2\theta_1}/x,tq^{-2\theta_t-2\theta_1}/x;q)_\infty}
\det Y^{\infty}(x,t)\,
\\
&=
-\frac{\vartheta(2\theta_\infty)}{\vartheta(2\sigma)}
\frac{(q^{2\theta_1+1}x,q^{-2\theta_1}t/x;q)_\infty}
{(qx,tq^{-2\theta_t-2\theta_1}/x;q)_\infty}
q^{2\theta_1^2+\theta_1}x^{2\theta_1}
\det Y^{0t}(x,t)
\\
&=-s\frac{\vartheta(2\theta_\infty)}{\vartheta(2\theta_0)}
\frac{(q^{2\theta_1+1}x,q^{2\theta_t+2\theta_1+1}x/t;q)_\infty}
{(qx,q^{2\theta_1+1}x/t;q)_\infty}
q^{2\theta_t^2+\theta_t+2\theta_1^2+\theta_1}
\left(q^{2\theta_1}\frac{x}{t}\right)^{2\theta_t}x^{2\theta_1}
\det Y^0(x,t)\,.
\end{align*}
Both sides of this expression are single-valued
and holomorphic on $\C^\times$. 
In view of the  behavior \eqref{Yinf}, \eqref{Y0} we conclude that 
\begin{align*}
\det Y^{\infty}(x,t)
 =
\frac{(q^{-2\theta_1}/x,tq^{-2\theta_t-2\theta_1}/x;q)_\infty}
{(1/x,tq^{-2\theta_1}/x;q)_\infty}\,.
\end{align*}
In particular, the matrices $Y^\infty(x,t)$, $Y^{0t}(x,t)$ and $Y^0(x,t)$ 
are generically invertible. 
Rewriting \eqref{connection1}, \eqref{connection2}
and using the periodicity of the braiding matrices, we obtain
\begin{align*}
A(x,t):=Y^\infty(qx,t){Y^\infty(x,t)}^{-1}
=Y^{0t}(qx,t){Y^{0t}(x,t)}^{-1} 
=Y^0(qx,t){Y^0(x,t)}^{-1}\,. 
\end{align*}
A similar argument shows that $A(x,t)$ 
is a rational function of $x$
with the only poles at $x=q^{-1},q^{-2\theta_1-1}t$. 
Using \eqref{Yinf}, \eqref{Y0} 
once again we find that $A(x,t)$ has the form 
stated above. 
The derivation of \eqref{B} is quite similar. 

Introduce  $y=y(t)$, $z=z(t)$ and $w=w(t)$ by
\begin{align*}
&A(x,t)_{+-} =\frac{
q^{\theta_\infty}w(x-y)
}
{(x-q^{-1})(x-tq^{-2\theta_1-1})}\,,
\\
&A(y,t)_{++}=\frac{y-tq^{-2\theta_t-2\theta_1-1}}{qz(y-q^{-1})}
\,,
\end{align*}
where we use $\pm$ to indicate components of 
a $2\times2$ matrix.
Notation being as above, the 
compatibility
\begin{equation}
 A(x,qt)B(x,t)=B(qx,t)A(x,t)\label{compatibility}
\end{equation}
of \eqref{A} and \eqref{B} leads to the $q$PVI equation. 
Here and after we write $\bar{f}(t)=f(qt)$,  $\underline{f}(t)=f(q^{-1}t)$.  

\begin{prop}\cite{JS}
The functions $y,z$ solve the $q$PVI system
\begin{align*}
&\frac{y\overline{y}}{a_3a_4}=
\frac{(\overline{z}-tb_1)(\overline{z}-tb_2)}{(\overline{z}-b_3)(\overline{z}-b_4)}\,,
\quad 
\frac{z\overline{z}}{b_3b_4}=
\frac{(y-ta_1)(y-ta_2)}{(y-a_3)(y-a_4)}\,,
\end{align*}
with the parameters
\begin{align*}
&a_1=q^{-2\theta_1-1},\quad a_2=q^{-2\theta_t-2\theta_1-1} ,
\quad a_3=q^{-1},\quad a_4=q^{-2\theta_1-1},\\
&b_1=q^{-\theta_0-\theta_t-\theta_1},\quad 
b_2=q^{\theta_0-\theta_t-\theta_1},\quad 
\quad b_3=q^{\theta_\infty-1},\quad 
b_4=q^{-\theta_\infty}\,.
\end{align*}
\qed
\end{prop}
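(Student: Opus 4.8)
The plan is to derive the $q$PVI system directly from the compatibility equation \eqref{compatibility}, treating $y,z,w$ as the coordinates introduced just above the Proposition. First I would pin down the structure of $A(x,t)$ and $B(x,t)$ completely: from \eqref{A} we already know $A_2=\mathrm{diag}(q^{-\theta_\infty},q^{\theta_\infty})$, the residue-type constraints at $x=q^{-1}$ and $x=tq^{-2\theta_1-1}$ coming from $Y^\infty$ and $Y^0$, and the determinant formula. Evaluating $\det A(x,t)$ at its zeros $x=q^{-2\theta_1-1}$ and $x=tq^{-2\theta_t-2\theta_1-1}$ shows $A(x,t)$ drops rank there, so each of these points is an ``apparent singularity'' of the linear system: the image of $A$ at such a point is one-dimensional. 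The definition of $y$ via $A(x,t)_{+-}$ identifies $y$ as (essentially) the zero of the $(+,-)$ entry, and $z$ is fixed by the value $A(y,t)_{++}$; I would express $A_1(t)$, $A_0(t)$ and $B_0(t)$ as rational functions of $y,z,w$ and the fixed exponents, using the normalizations \eqref{Yinf}, \eqref{Y0} and the known $A_0(t)=tq^{-3\theta_1-\theta_t-2}G(t)\,\mathrm{diag}(q^{\theta_0},q^{-\theta_0})\,G(t)^{-1}$, which encodes $\theta_0$ through $\mathrm{tr}$ and $\det$.

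Next I would write out \eqref{compatibility} as an identity of rational matrix functions of $x$. Both sides have the same denominators once one uses \eqref{A}, \eqref{B}, so clearing denominators gives a polynomial identity in $x$ of low degree; comparing coefficients yields a finite system of scalar equations relating $y,z,w$ and their $t\mapsto qt$ shifts $\bar y,\bar z,\bar w$. The key algebraic step is to evaluate \eqref{compatibility}, or rather the $(+,-)$ component of suitable rearrangements of it, at the special points $x=q^{-1}$, $x=tq^{-2\theta_1-1}$, $x=tq^{-2\theta_t-2\theta_1}$ and at $x=y$, $x=q^{-1}\bar y$ (the pole of $B(qx,t)$ shifted, and the apparent singularities): at each such point one side has a pole or a rank drop, forcing a scalar relation. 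The bilinear form of the answer — $y\bar y/(a_3a_4)$ equal to a ratio of four linear factors in $\bar z$, and symmetrically — should emerge after identifying the $a_i,b_i$ with the indicated $q$-powers of $\theta_0,\theta_t,\theta_1,\theta_\infty$; concretely, $a_1,a_2$ are the apparent-singularity positions of $A$ divided by the moving pole, $a_3,a_4$ the fixed pole positions, while $b_3=q^{\theta_\infty-1}$, $b_4=q^{-\theta_\infty}$ come from the eigenvalues of $A_2$ and $b_1,b_2=q^{\mp\theta_0-\theta_t-\theta_1}$ from the eigenvalues of $A_0(t)/t$.

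I expect the main obstacle to be the bookkeeping of the ``apparent singularity'' conditions: ensuring that $Y^\infty$ stays holomorphic at $x=q^{-1}$ forces the residue of $A(x,t)$ there to be a rank-one matrix with a prescribed kernel/image, and similarly at $x=tq^{-2\theta_1-1}$ from $Y^0$; translating these rank conditions into clean equations for $y,z$ and then checking they are \emph{preserved} under $t\mapsto qt$ (which is exactly what \eqref{compatibility} guarantees, but must be made explicit) is where the computation is most delicate. A secondary subtlety is matching conventions: the shift in $B$ moves the pole from $tq^{-2\theta_t-2\theta_1}$ to $q\cdot tq^{-2\theta_t-2\theta_1}$, and one must track how this interacts with the pole $x=tq^{-2\theta_1-1}$ of $A$. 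Since the Proposition is quoted from \cite{JS}, I would organize the proof so that the bulk reduces to the residue computation at the four distinguished values of $x$ together with the leading behavior at $x=0,\infty$, and present the coefficient comparison only schematically, citing \cite{JS} for the routine verifications; the nontrivial content is the explicit dictionary between the linear-problem data $(\theta_0,\theta_t,\theta_1,\theta_\infty,\sigma,s)$ and the $q$PVI parameters $a_i,b_i$, which I would state and then verify on the pole/eigenvalue data above.
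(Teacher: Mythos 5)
Your proposal follows the same route as the paper's source: the paper itself gives no proof of this Proposition but cites \cite{JS}, where the $q$PVI system is derived exactly as you describe, by clearing denominators in the compatibility condition $A(x,qt)B(x,t)=B(qx,t)A(x,t)$ and comparing coefficients / evaluating at the distinguished points, so your outline is essentially the intended argument. One correction to your parameter dictionary before you carry out the verification: the poles of $A$ are $x=q^{-1}=a_3$ and $x=tq^{-2\theta_1-1}=ta_1$, while the zeros of $\det A(x,t)$ are $x=q^{-2\theta_1-1}=a_4$ and $x=tq^{-2\theta_t-2\theta_1-1}=ta_2$, so the grouping $(y-ta_1)(y-ta_2)$ versus $(y-a_3)(y-a_4)$ is by $t$-dependence (each pair contains one pole and one zero of $\det A$), not ``apparent singularities'' versus ``fixed poles'' as you wrote.
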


\begin{rem}
 From the compatibility condition \eqref{compatibility},
 we obtain the expression of the matrix $B(x,t)$ in terms of $y$ and
 $z$, etc.
 In particular, the $(+,-)$  
element of the matrix $B_0(q^{-1}t)$ is
 written as
 \begin{equation}\label{B_z}
  B_0(q^{-1}t)_{+-}=\frac{q^{1+\theta_\infty}zw}{1-q^{1-\theta_\infty}z}.
 \end{equation}
\end{rem}

\subsection{CFT construction}
We are now in a position to construct solutions of the 
generalized Riemann problem in terms of conformal block functions. 
Following the method of \cite{ILTe}, we consider the following 
sums of 5 point conformal blocks, 
\begin{align}
&Y^{\infty}_{\e,\e'}(x,t)=
\frac{1}{k_\infty(\e)} 
\sum_{n\in\Z}s^n
\F\Bigl({{\phantom{hh00}\half
\phantom{hhhh0000} \theta_1\phantom{hhhhh}
\theta_t 
\phantom{hhhhh}}
\atop
{\theta_\infty-\textstyle{\frac{\e}{2}}
\phantom{h0}\theta_\infty-\textstyle{\frac{\e-\e'}{2}}
\phantom{h0}\sigma+n
\phantom{h0}
\theta_0
}}
;q^{2\theta_t+2\theta_1}x,q^{2\theta_t},t\Bigr) \,,
\label{CFT-Yinf}\\
&Y^{0t}_{\e,\e'}(x,t)
=\frac{x^{-\theta_1}}{k_{0t}(\e)} 
\sum_{n\in\Z}s^n
\F\Bigl({{\phantom{0000}\theta_1
\phantom{hhhhh00000}\half
\phantom{hhhhh} \theta_t
\phantom{hhhhh00}}
\atop
{\theta_\infty-\textstyle{\frac{\e}{2}}
\phantom{h0}\sigma+n+\textstyle{\frac{\e'}{2}}
\phantom{h0}\sigma+n
\phantom{h0}\theta_0
}
}
;q^{2\theta_t},q^{2\theta_t+2\theta_1}x,t\Bigr) 
\label{CFT-Y0t}\\
&\qquad \qquad
=\frac{x^{-\theta_1}}{k_{0t}(\e)} 
\sum_{n\in\Z}s^{n+\frac{1-\e'}{2}}
\F\Bigl({{\phantom{00000} \theta_1\phantom{hhhhhhh00}\half\phantom{hhhhhhhh000}\theta_t\phantom{hhh}}
\atop
{\theta_\infty-\textstyle{\frac{\e}{2}}
\phantom{h0}\sigma+n+\half
\phantom{h0}\sigma+n+\textstyle{\frac{1-\e'}{2}}
\phantom{h0}\theta_0
}}
;q^{2\theta_t},q^{2\theta_t+2\theta_1}x,t\Bigr)\,,
\nn\\
&Y^0_{\e,\e'}(x,t)=
\frac{x^{-\theta_t-\theta_1}}{k_0(\e)} 
\sum_{n\in\Z}s^n
\F\Bigl(
{{\phantom{00000}\theta_1\phantom{hhhhh00} \theta_t\phantom{hhhh00}\half\phantom{hhhh}}
\atop
{\theta_\infty-\textstyle{\frac{\e}{2}}
\phantom{h0}
\sigma+n+\half
\phantom{h0}
\theta_0+\textstyle{\frac{\e'}{2}}
\phantom{h0}
\theta_0}}
;q^{2\theta_t},t,q^{2\theta_t+2\theta_1}x\Bigr) \,,
\label{CFT-Y0}
\end{align}
where  
\begin{align*}
&k_\infty(\varepsilon)=q^{(\theta_\infty-\varepsilon/2)^2-2\varepsilon\theta_\infty(\theta_t+\theta_1)}
\N'\Bigl({{\phantom{h} \half \phantom{h_3}}
\atop{\theta_\infty-\textstyle{\frac{\e}{2}} \phantom{h_2} 
\theta_\infty   
}}
\Bigr)
\hat{\tau}\,,
\\
&k_{0t}(\e)=k_\infty(\e) q^{\theta_1^2+\theta_1/2}, 
\quad
k_{0}(\e)=k_{0t}(\e) 
q^{\theta_t^2+\theta_t/2+2\theta_1\theta_t}t^{-\theta_t}, 
\end{align*}
\\
and 
\begin{align*}
&\hat{\tau}=
\sum_{n\in\Z}s^n
\F\Bigl({{\phantom{h00}\theta_1 \phantom{hhhh} \theta_t\phantom{hhhh}}
\atop
{\theta_\infty\phantom{h0}\sigma+n\phantom{h0}\theta_0}}
;q^{2\theta_t},t\Bigr) \,.
\end{align*}
We assume that these series converge in the domains \eqref{domain}. 
The asymptotic behavior \eqref{Yinf}, \eqref{Y0} 
are simple consequences of the expansion \eqref{block} of conformal blocks.
The crucial point is the validity of the connection formulas 
\eqref{connection1}, \eqref{connection2}
due to the periodicity property \eqref{braid-period} of the braiding matrix. 
Therefore, formulas \eqref{CFT-Yinf}, \eqref{CFT-Y0t}, \eqref{CFT-Y0}
solve the generalized Riemann problem.

Define the tau function by 
\begin{align*}
&\tau\left[
\begin{matrix}
 \theta_1  & \theta_t \\
 \theta_\infty  & \theta_0\\
\end{matrix}\Bigl|s,\sigma,t\right]
=q^{-2(\theta_t+\theta_1)\theta_\infty^2+2\theta_t\theta_1^2}
G_q(1+2\theta_\infty)G_q(1-2\theta_0)\cdot
\hat{\tau} \,.
\end{align*}
Explicitly it 
has the AGT series representation
\begin{align}
&\tau\left[
\begin{matrix}
 \theta_1  & \theta_t \\
 \theta_\infty  & \theta_0\\
\end{matrix}\Bigl|s,\sigma,t\right]
=\sum_{n\in\Z}
s^n t^{(\sigma+n)^2-\theta_t^2-\theta_0^2}
C\left[\begin{matrix}
\theta_1   & \theta_t \\
 \theta_\infty &  \theta_0\\
\end{matrix}\Bigl|\sigma+n\right]
Z\left[\begin{matrix}
\theta_1  &   \theta_t\\
\theta_\infty   &\theta_0 \\
\end{matrix}\Bigl|\sigma+n,t\right]\,,
\label{taufn}
\end{align}
with the definition
\begin{align*}
&C\left[\begin{matrix}
\theta_1  &  \theta_t \\
\theta_\infty &  \theta_0 \\
\end{matrix}\Bigl|
\sigma\right]
=\frac{\prod_{\e,\e'=\pm}G_q(1+\e\theta_\infty-\theta_1+\e'\sigma)
G_q(1+\e\sigma-\theta_t+\e'\theta_0)}
{G_q(1+2\sigma)G_q(1-2\sigma)}
\,,
\\
&Z\left[\begin{matrix}
 \theta_1  &  \theta_t\\
 \theta_\infty &  \theta_0\\
\end{matrix}\Bigl|
\sigma,t\right]
=\sum_{\bla=(\lambda_+,\lambda_-)\in\Lambda^2}
t^{|\bla|}\cdot
\frac{
\prod_{\e,\e'=\pm}
N_{\emptyset,\lambda_{\e'}}(q^{\e\theta_\infty-\theta_1-\e'\sigma})
N_{\lambda_{\e},\emptyset}(q^{\e\sigma-\theta_t-\e'\theta_0})}
{\prod_{\e,\e'=\pm}N_{\lambda_\e,\lambda_{\e'}}(q^{(\e-\e')\sigma})}
\,.
\end{align*}

Comparing the asymptotic expansion at $x=0$ and $x=\infty$, 
we can express
the $(+,-)$
element of the matrix $A(x,t)$ 
and $B(x,t)$
in terms of tau functions. 
\begin{thm}
The following expressions hold for $y$,
$z$, and $w$
in terms of tau functions:
\begin{align}
y=&q^{-2\theta_1-1}t\cdot \frac{\tau_3\tau_4}{\tau_1\tau_2}\,,
\label{y-tau}\\
z=&\frac{\underline{\tau_1}\tau_2-\tau_1\underline{\tau_2}}
{q^{\theta_\infty}\underline{\tau_1}\tau_2-q^{1-\theta_\infty}\tau_1\underline{\tau_2}}\,,
\label{z-tau}\\
w=&q^{-1}(1-q^{1-2\theta_\infty})
\frac{\Gamma_q(2\theta_\infty)}
{\Gamma_q(2-2\theta_\infty)}\cdot
\frac{\tau_2}{\tau_1}\,.\nn
\end{align}
Here we put
\begin{alignat*}{2}
\tau_1 &=\tau\left[
\begin{matrix}
  \theta_1 & \theta_t \\
  \theta_\infty & \theta_0\\
\end{matrix}\Bigl|s,\sigma,t\right]\,,
&
\tau_2 &=\tau\left[
\begin{matrix}
  \theta_1 & \theta_t \\
  \theta_\infty-1 & \theta_0\\
\end{matrix}\Bigl|s,\sigma,t\right]\,,
\\
\tau_3 &=\tau\left[
\begin{matrix}
  \theta_1 & \theta_t \\
  \theta_\infty-\half & \theta_0+\half\\
\end{matrix}\Bigl|s,\sigma+\half ,t\right]\,,
\quad
&
\tau_4 &=\tau\left[
\begin{matrix}
  \theta_1 & \theta_t \\
  \theta_\infty-\half & \theta_0-\half\\
\end{matrix}\Bigl|s,\sigma-\half ,t\right]\,.
\end{alignat*}
\qed
\end{thm}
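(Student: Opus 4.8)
The plan is to read off $y,z,w$ from the asymptotic expansions of the CFT solutions \eqref{CFT-Yinf}, \eqref{CFT-Y0}. Recall the defining relations: $y$ and $w$ are the parameters in $A(x,t)_{+-}=q^{\theta_\infty}w(x-y)/((x-q^{-1})(x-tq^{-2\theta_1-1}))$, where $A(x,t)=Y^\infty(qx,t)Y^\infty(x,t)^{-1}=Y^0(qx,t)Y^0(x,t)^{-1}$ as in \eqref{A}, while $z$ enters through $B_0(q^{-1}t)_{+-}=q^{1+\theta_\infty}zw/(1-q^{1-\theta_\infty}z)$ from \eqref{B_z}. So it is enough to evaluate, in terms of tau functions, the subleading coefficient $Y_1(t)_{+-}$ in \eqref{Yinf}, the leading coefficients $G(t)_{++},G(t)_{+-}$ in \eqref{Y0}, and $\det G(t)$. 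The one mechanism used throughout is that in a five point block the variable $x$ enters the combinatorial sum \eqref{block} only through a single factor of the form (constant$\cdot x$)$^{|\bla^{(p)}|}$, so the leading term in the relevant limit forces that summation partition to be $(\emptyset,\emptyset)$, and then the five point block truncates to a four point block which, summed over $n\in\Z$ against $s^n$, reproduces $\hat\tau$ (hence a $\tau$ of \eqref{taufn}) with suitably shifted parameters.

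For the expansion at $x=\infty$, in \eqref{CFT-Yinf} the $x$-dependence of \eqref{block} is through $x^{-|\bla^{(2)}|}$, so the leading term comes from $\bla^{(2)}=(\emptyset,\emptyset)$; the Nekrasov products telescope and the block becomes a four point block whose external weight at infinity is $\sigma_2=\theta_\infty-(\e-\e')/2$. Summing over $n$, the blocks with $\e=\e'$ give leading coefficient $1$ for $Y^\infty_{\pm\pm}(x,t)\sim x^{\mp\theta_\infty}$ — this is precisely what the normalization constants $k_\infty(\e)$ and $\N'$ are designed to achieve — while $Y^\infty_{+-}(x,t)\sim x^{\theta_\infty-1}$ has leading coefficient $Y_1(t)_{+-}$, which after the $G_q$–$\Gamma_q$ reductions equals $\tfrac{\Gamma_q(2\theta_\infty)}{\Gamma_q(2-2\theta_\infty)}\tfrac{\tau_2}{\tau_1}$. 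Expanding $A(x,t)=Y^\infty(qx,t)Y^\infty(x,t)^{-1}$ at $x=\infty$ gives $A(x,t)_{+-}=(q^{\theta_\infty-1}-q^{-\theta_\infty})Y_1(t)_{+-}x^{-1}+O(x^{-2})$, whence $q^{\theta_\infty}w=(q^{\theta_\infty-1}-q^{-\theta_\infty})Y_1(t)_{+-}$, the stated formula for $w$. For $z$: from $Y(x,qt)=B(x,t)Y(x,t)$ one has $B(x,q^{-1}t)=Y^\infty(x,t)Y^\infty(x,q^{-1}t)^{-1}$, and expanding at $x=\infty$ gives $B_0(q^{-1}t)_{+-}=Y_1(t)_{+-}-Y_1(q^{-1}t)_{+-}=\tfrac{\Gamma_q(2\theta_\infty)}{\Gamma_q(2-2\theta_\infty)}\bigl(\tfrac{\tau_2}{\tau_1}-\tfrac{\underline{\tau_2}}{\underline{\tau_1}}\bigr)$. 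Substituting this and $w$ into \eqref{B_z} and solving the resulting linear equation for $z$ yields \eqref{z-tau} after elementary simplification.

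For the expansion at $x=0$, in \eqref{CFT-Y0} the $x$-dependence of \eqref{block} is through $x^{|\bla^{(1)}|}$, so the leading term comes from $\bla^{(1)}=(\emptyset,\emptyset)$, for which the block truncates to a four point block with external weight $\sigma_1=\theta_0+\e'/2$ at $x=0$ and internal weight $\sigma+n+\tfrac12$. Summing over $n$ against $s^n$ and dividing by $k_0(\e)$, which is again proportional to $\tau_1$, gives $G(t)_{++}=(\text{explicit})\cdot\tau_3/\tau_1$ and $G(t)_{+-}=(\text{explicit})\cdot\tau_4/\tau_1$, where the shift $\sigma\mapsto\sigma+\tfrac12$ is matched to the definitions of $\tau_3,\tau_4$ using the quasi-periodicity $\tau[\cdots\,|\,s,\sigma+1,t]=s^{-1}\tau[\cdots\,|\,s,\sigma,t]$ (the resulting $s$-powers will cancel below). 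Independently, taking the determinant of the connection relations \eqref{connection1}, \eqref{connection2} and using \eqref{Yinf}, \eqref{Y0} — exactly as in the derivation of \eqref{A} — gives $\det G(t)$ in closed form, with no tau functions. Now compute $A(0,t)_{+-}$ two ways: from $Y^0$ it equals $q^{-\theta_t-\theta_1}(q^{-\theta_0}-q^{\theta_0})\,G(t)_{++}G(t)_{+-}/\det G(t)$, while from the rational form it equals $-q^{\theta_\infty+2\theta_1+2}t^{-1}wy$. Equating the two, inserting the formulas for $w$, $G(t)_{\pm+}$, $\det G(t)$, and simplifying (all $G_q$, $\Gamma_q$, $\vartheta$ and $s$ factors cancel) produces $y=q^{-2\theta_1-1}t\,\tau_3\tau_4/(\tau_1\tau_2)$, i.e. \eqref{y-tau}.

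The conceptual content is light: everything follows from the empty-partition truncation of \eqref{block} together with \eqref{taufn} and \eqref{B_z}. The main obstacle is the bookkeeping of scalar prefactors and the verification that they collapse to exactly the constants shown in \eqref{y-tau}, \eqref{z-tau} and the formula for $w$: one must juggle the $q$-power and $\N,\N'$ prefactors in \eqref{block}, \eqref{Nfac} and in $k_\infty(\e),k_{0t}(\e),k_0(\e)$, repeatedly use $G_q(u+1)=\Gamma_q(u)G_q(u)$ to move between $G_q$ at shifted arguments, and track the $\vartheta$-ratios coming from $\det\B_1,\det\B_2$. One must also keep straight which $(\e,\e')$ labels which matrix entry (diagonal versus off-diagonal, and the associated power of $x$), and, in the $x=0$ analysis, correctly reconcile the half-integer shifts of $\sigma$ and $\theta_0$ generated by the degenerate block with the parameters of $\tau_3,\tau_4$ via the quasi-periodicity of the tau function.
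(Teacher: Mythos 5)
Your proposal is correct and follows essentially the same route as the paper: expand $Y^\infty$ at $x=\infty$ and $Y^0$ at $x=0$ via the empty-partition truncation of the block \eqref{block} to identify $Y_1(t)_{+-}$ and $G(t)_{\e\e'}$ as ratios of tau functions, then read off $w$ and $y$ from $A_1(t)_{+-}$ and $A_0(t)_{+-}$ (equivalently your two evaluations of $A(x,t)_{+-}$) and $z$ from $B_0(q^{-1}t)_{+-}=Y_1(t)_{+-}-Y_1(q^{-1}t)_{+-}$ combined with \eqref{B_z}. Your extra remarks on the $\sigma$-quasi-periodicity of the tau function and the $s$-factor cancellation against $\det G(t)$ are details the paper leaves implicit, and they check out.
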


\begin{proof}
From the explicit expression of the 5-point conformal block function
 \eqref{block}, we have
 \begin{align*}
  \F\Bigl(
 {
 {\ \theta_3\
 \theta_{2}
 \ \theta_1\ }
 \atop
 {\theta_{4}\ \sigma_{2}\
 \sigma_{1}\ \theta_{0}}
 }
 ;x_3,x_2,x_{1}
 \Bigr)
 =&\
 \N\Bigl({\kern-1pt\theta_1\atop{\sigma_1\ \theta_0}}
 \Bigr)q^{2\theta_1\sigma_{1}^2}
 \cdot 
 x_1^{\sigma_{1}^2-\theta_1^2-\theta_{0}^2}
 \Bigl(
 \F\Bigl(
 {
 {\ \theta_3
 \ \theta_2\ }
 \atop
 {\theta_{4}\
 \sigma_{2}\ \sigma_{1}}
 }
 ;x_3,x_2
 \Bigr) +O(x_1)
 \Bigr)\\
 =&\
 \N\Bigl({\kern-1pt\theta_3\atop{\theta_4\ \sigma_2}}
 \Bigr)q^{2\theta_3\theta_{4}^2}
 \cdot 
 x_3^{\theta_{4}^2-\theta_3^2-\sigma_{2}^2}
 \Bigl(
 \F\Bigl(
 {
 {\ \theta_2
 \ \theta_1\ }
 \atop
 {\sigma_{2}\ \sigma_{1}\
 \theta_0}
 }
 ;x_2,x_1
 \Bigr) +O\!\left(\frac{1}{x_3}\right)
 \Bigr)\,.
 \end{align*}
 By using this expression,
 we calculate the asymptotic behavior of $Y^\infty(x,t)$, $Y^0(x,t)$ and get
 \begin{align*}
  G(t)_{\e ,\e'}=q^Ct^{\theta_t}
  \frac{\Gamma_q(2\e\theta_\infty)}{\Gamma_q(1+2\e'\theta_0)}
  \frac{\tau\left[
  \begin{matrix}
  \theta_1 & \theta_t \\
  \theta_\infty-\frac{\e}{2} & \theta_0+\frac{\e'}{2}\\
  \end{matrix}
  \Bigl|s,\sigma+\half,t\right]}{\tau_1},\qquad
  Y_1(t)_{+-}=\frac{\Gamma_q(2\theta_\infty)}{\Gamma_q(2-2\theta_\infty)}
  \cdot\frac{\tau_2}{\tau_1},
 \end{align*}
 where $C={\theta_0}^2-(\theta_1+\theta_t)^2-{\theta_\infty}^2
 +\e' (2\theta_1+2\theta_t+1) +\e\theta_\infty$ (see
 \eqref{Yinf}--\eqref{Y0} for $G(t)$ and $Y_1(t)$).

 We can write down the exponents of matrices $A(x,t)$ and $B(x,t)$ in
 these terms.
 In particular, the expressions
 \begin{equation*}
  A_0(t)_{+-}=q^{-3\theta_1-\theta_t-2}(q^{-\theta_0}-q^{\theta_0})
  \frac{G(t)_{++}G(t)_{+-}}{\det G(t)},\qquad
  A_1(t)_{+-}=(q^{\theta_\infty-1}-q^{-\theta_\infty})Y_1(t)_{+-}
 \end{equation*}
 give the formula for $w$ and $y$.
 The expression of $z$ is obtained from
 \begin{equation*}
   B_0(q^{-1}t)_{+-}=Y_1(t)_{+-}-Y_1(q^{-1}t)_{+-}
 \end{equation*}
 and the relation \eqref{B_z}.
\end{proof}

In \cite{Ma}, Mano constructed two parametric local solutions of 
the $q$PVI equation near the critical points $t=0,\infty$. 
The above formula for $y$ extends Mano's asymptotic expansion to all orders. 

We expect that the formula \eqref{z-tau}
for $z$ can be simplified, 
see \eqref{yz-final} below.


\subsection{Bilinear equations}

In \cite{TM}, Tsuda and Masuda formulated the tau function of the 
$q$PVI equation to be a function on the weight lattice of 
$D^{(1)}_5$ with symmetries under the affine Weyl group.
Motivated by \cite{TM},
we consider the set of 
eight tau functions defined by the AGT series \eqref{taufn}:
\begin{alignat*}{2}
\tau_1 &=\tau\left[
\begin{matrix}
  \theta_1 & \theta_t \\
  \theta_\infty+\half & \theta_0\\
\end{matrix}\Bigl|s,\sigma,t\right]\,,
&
\tau_2 &=\tau\left[
\begin{matrix}
  \theta_1 & \theta_t \\
  \theta_\infty-\half & \theta_0\\
\end{matrix}\Bigl|s,\sigma,t\right]\,,
\\
\tau_3 &=\tau\left[
\begin{matrix}
  \theta_1 & \theta_t \\
  \theta_\infty & \theta_0+\half\\
\end{matrix}\Bigl|s,\sigma+\half ,t\right]\,,
&
\tau_4 &=\tau\left[
\begin{matrix}
  \theta_1 & \theta_t \\
  \theta_\infty & \theta_0-\half\\
\end{matrix}\Bigl|s,\sigma-\half ,t\right]\,,
\\
\tau_5 &=\tau\left[
\begin{matrix}
  \theta_1-\half & \theta_t \\
  \theta_\infty & \theta_0\\
\end{matrix}\Bigl|s,\sigma,t\right]\,,
&
\tau_6 &=\tau\left[
\begin{matrix}
  \theta_1+\half & \theta_t \\
  \theta_\infty & \theta_0\\
\end{matrix}\Bigl|s,\sigma,t\right]\,,
\\
\tau_7 &=\tau\left[
\begin{matrix}
  \theta_1 & \theta_t-\half \\
  \theta_\infty & \theta_0\\
\end{matrix}\Bigl|s,\sigma+\half ,t\right]\,,
\quad
&
\tau_8 &=\tau\left[
\begin{matrix}
  \theta_1 & \theta_t+\half \\
  \theta_\infty & \theta_0\\
\end{matrix}\Bigl|s,\sigma-\half ,t\right]\,.
\\
\end{alignat*}
For convenience, we shift the parameter $\theta_\infty$ 
in the previous section to 
$\theta_\infty+1/2$. 

On the basis of a computer run we put forward: 
\begin{conj}
For the eight tau functions defined above, the following 
bilinear equations hold. 
\begin{align}
&\tau_1\tau_2-q^{-2\theta_1}t \,\tau_3\tau_4
-(1-q^{-2\theta_1}t)\tau_5 \tau_6=0,
\label{bilin-1}\\ 
&\tau_1\tau_2-t \,\tau_3\tau_4
-(1-q^{-2\theta_t}t)\,\underline{\tau_5}\overline{\tau_6}=0,
\label{bilin-2}\\ 
 &\tau_1\tau_2-\tau_3\tau_4+(1-q^{-2\theta_1}t)q^{2\theta_t}\underline{\tau_7}\overline{\tau_8}=0,
\label{bilin-3}\\ 
&\tau_1\tau_2-q^{2\theta_t}\tau_3\tau_4+
(1-q^{-2\theta_t}t)q^{2\theta_t}\tau_7 \tau_8=0,
\label{bilin-4}\\ 
&\underline{\tau_5}\tau_6+q^{-\theta_1-\theta_\infty+\theta_t-1/2}t\, \underline{\tau_7}\tau_8
-\underline{\tau_1}\tau_2=0,
\label{bilin-5}\\ 
&\underline{\tau_5}\tau_6+q^{-\theta_1+\theta_\infty+\theta_t-1/2}t\, \underline{\tau_7}\tau_8
-\tau_1\underline{\tau_2}=0,
\label{bilin-6}\\
&\underline{\tau_5}\tau_6+q^{\theta_0+2\theta_t}\underline{\tau_7}\tau_8
-q^{\theta_t} \underline{\tau_3}\tau_4=0,
\label{bilin-7}\\ 
&\underline{\tau_5}\tau_6+q^{-\theta_0+2\theta_t}\underline{\tau_7}\tau_8
-q^{\theta_t} \tau_3\underline{\tau_4}=0\, .
\label{bilin-8}
\end{align} 
The solution $y,z$ of qPVI are expressed as 
\begin{align}\label{yz-final}
y=q^{-2\theta_1-1}t\cdot \frac{\tau_3\tau_4}{\tau_1\tau_2}\,,
\quad
z=-q^{\theta_t-\theta_1-1}  
t\cdot
\frac{\underline{\tau_7}\tau_8}{\underline{\tau_5}\tau_6}\,.
\end{align}
\qed
\end{conj}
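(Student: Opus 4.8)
The plan is to prove the eight bilinear relations \eqref{bilin-1}--\eqref{bilin-8} as identities of formal series in $t$ (and Laurent series in $s$), and then to deduce \eqref{yz-final} from them together with the already-established formulas \eqref{y-tau}--\eqref{z-tau}. For the first part I would argue directly from the combinatorial definition \eqref{taufn}: insert it into a given relation and extract the coefficient of $s^n t^k$. The sum over $s$ contributes only through the shift $\sigma\mapsto\sigma+n$, so each relation splits into a family indexed by $n\in\Z$, and the $t$-grading reduces these to finite identities relating the $q$-Barnes prefactors $C[\cdots\,|\,\sigma]$ to the truncated Nekrasov sums $\sum_{|\bla|\le k}$. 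The prefactor part is handled by the functional equations $\Gamma_q(u+1)=[u]\Gamma_q(u)$, $G_q(u+1)=\Gamma_q(u)G_q(u)$ together with the quasi-periodicity of $\vartheta$; what is left is a rational-function identity in $q^{\theta_1},q^{\theta_t},q^{\theta_0},q^{\theta_\infty},q^{\sigma}$ among products of Nekrasov factors $N_{\lambda,\mu}$. Since $t=q$ keeps us at $c=1$ --- a five-dimensional setting --- I expect these to be $q$-analogs of the Nakajima--Yoshioka blow-up equations, descending from a single master bilinear identity for $Z[\cdots|\sigma,t]$ summed against the Fourier variable $s$, in the spirit of the known proofs of the PVI bilinear equations \cite{BS}. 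Proving that master identity uniformly in all pairs of partitions --- as opposed to the finitely many orders checked by the computer run --- is the technical heart of the argument; I would look for it by pushing the trivalent-vertex calculus of \cite{AFS}, in which \eqref{block} was derived, one step further. The $D_5^{(1)}$ description of the $q$PVI tau function of Tsuda and Masuda \cite{TM} should also serve as an independent guide to the shape of the eight relations.

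A more conceptual, but currently obstructed, route is to read all eight tau functions off the generalized Riemann problem of Section 3. Four of them, $\tau_1,\dots,\tau_4$, already appear there through the expansions of the fundamental solution $Y(x,t)$ at $x=\infty$ and $x=0$ and the resulting formulas for $G(t)$ and $Y_1(t)$. One would like to obtain $\tau_5,\dots,\tau_8$ by transporting the degenerate field $V'$ across the primaries located at $x_1$ and $x_2$ via the braiding relation \eqref{braiding-rel}, producing companion solutions of the linear $q$-difference system \eqref{A}--\eqref{B} with $\theta_1$, respectively $\theta_t$ (and correspondingly $\sigma$), shifted by $\pm\tfrac12$. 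Each bilinear equation would then be a $2\times2$ linear-algebra identity: compare two evaluations of a fixed entry of $A(x,t)$ or $B(x,t)$ (or of $\det Y$) at one of the distinguished points $x=q^{-1},\,tq^{-2\theta_1-1},\,q^{-2\theta_1-1},\,tq^{-2\theta_t-2\theta_1-1}$ --- the poles of $A$ and the zeros of $\det A$ --- and use the compatibility \eqref{compatibility}; the relations \eqref{bilin-5}--\eqref{bilin-6}, which involve the $t$-shift $\underline{(\,\cdot\,)}$, would come from \eqref{B} evaluated at its pole. The obstruction, as stressed in the introduction, is the absence of an operator product expansion for the $q$-fields: one cannot take the collision limits of $x_1,x_2$ with the neighbouring punctures, so $\tau_5,\dots,\tau_8$ cannot presently be identified with linear-problem data except indirectly, e.g. by first completing the combinatorial route above and only then recognizing what the identities mean geometrically. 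This missing fusion is, I believe, the main obstacle to a clean proof.

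Granting the bilinear equations, \eqref{yz-final} follows quickly. The formula for $y$ is literally \eqref{y-tau}. For $z$, relations \eqref{bilin-5} and \eqref{bilin-6} express the two products $\underline{\tau_1}\tau_2$ and $\tau_1\underline{\tau_2}$ each as $\underline{\tau_5}\tau_6$ plus a (different) monomial multiple of $t\,\underline{\tau_7}\tau_8$; substituting these into \eqref{z-tau} written in the conjecture's parameter convention, the $\underline{\tau_5}\tau_6$ contributions cancel in the numerator and the $t\,\underline{\tau_7}\tau_8$ contributions cancel in the denominator, so $z$ collapses to a monomial times $t\,\underline{\tau_7}\tau_8/(\underline{\tau_5}\tau_6)$, the monomial being pinned down by matching a single power of $q$.
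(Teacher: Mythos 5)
The statement you are trying to prove is labelled a \emph{Conjecture} in the paper, and the paper itself offers no proof: the eight bilinear relations \eqref{bilin-1}--\eqref{bilin-8} are put forward ``on the basis of a computer run,'' with a proof known only in the special case of Casorati-determinant (Riccati-type) solutions from \cite{S}, and the authors explicitly pose the general case as an open problem. Measured against that, your first two paragraphs are a research programme rather than an argument: phrases such as ``I would argue,'' ``I expect these to be $q$-analogs of the Nakajima--Yoshioka blow-up equations,'' and ``proving that master identity uniformly in all pairs of partitions\dots is the technical heart'' correctly locate the difficulty but do not resolve it. Reducing each relation to a rational identity in Nekrasov factors order by order in $s$ and $t$ is the right normal form, but the uniform partition-sum identity is precisely what nobody (including the authors) has established, and your second, Riemann-problem route founders on exactly the obstruction the paper names --- the absence of fusion prevents $\tau_5,\dots,\tau_8$ from being attached to the linear problem. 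So as a proof of \eqref{bilin-1}--\eqref{bilin-8} the proposal has a genuine, acknowledged gap; to be fair, it is the same gap the paper leaves open.

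Your third paragraph, by contrast, is correct and complete, and it is exactly how the paper intends \eqref{yz-final} to be read. After the shift $\theta_\infty\mapsto\theta_\infty+\frac12$ the relabelled $\tau_1,\dots,\tau_4$ agree with those of Theorem 3.3, so $y$ is literally \eqref{y-tau}. For $z$, substituting \eqref{bilin-5}--\eqref{bilin-6} into the shifted \eqref{z-tau} gives numerator $q^{\theta_t-\theta_1-1/2}\bigl(q^{-\theta_\infty}-q^{\theta_\infty}\bigr)t\,\underline{\tau_7}\tau_8$ while in the denominator the coefficient of $t\,\underline{\tau_7}\tau_8$ is $q^{\theta_t-\theta_1}-q^{\theta_t-\theta_1}=0$, leaving $q^{1/2}\bigl(q^{\theta_\infty}-q^{-\theta_\infty}\bigr)\underline{\tau_5}\tau_6$; the quotient is $-q^{\theta_t-\theta_1-1}t\,\underline{\tau_7}\tau_8/(\underline{\tau_5}\tau_6)$ as claimed. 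This conditional derivation is sound and consistent with the paper's closing Remark; the unconditional content of the conjecture remains unproved in both your proposal and the paper.
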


The bilinear equations \eqref{bilin-1}--\eqref{bilin-8} 
were originally proved in \cite{S} in the special case 
when the tau functions are given by 
Casorati determinants of basic hypergeometric functions. 
It would be interesting if one can prove 
them in the general case 
by a proper extension of the methods of \cite{BS} or \cite{GL}.

\begin{rem}
Formula \eqref{z-tau} for $z$ is obtained from the asymptotic expansion at $x=\infty$.
The expansion at $x=0$ gives an alternative formula for the same quantity.
Comparing these (and shifting $\theta_\infty$ to 
$\theta_\infty+1/2$),
we arrive at the following bilinear relation:
\begin{align*}
 \underline{\tau_1}\tau_2-\tau_1\underline{\tau_2}
 =\frac{q^{1/2+\theta_\infty}-q^{1/2-\theta_\infty}}
 {q^{-\theta_0}-q^{\theta_0}}q^{-\theta_1-1}t
 \left(\underline{\tau_3}\tau_4-\tau_3\underline{\tau_4}\right)\,.
\end{align*}
This is consistent with 
the conjectured relations \eqref{bilin-5}--\eqref{bilin-8}.
\end{rem}

\appendix

\section{Braiding relation}

In this appendix, we give a direct proof of Theorem \ref{thm-braiding}. 
\medskip

We rewrite the braiding relation 
\eqref{braiding-rel}
into the form of matrix elements 
$\langle \alpha|\otimes \langle\beta|\cdots |\lambda\rangle\otimes|\mu\rangle$ 
for any partitions $\lambda,\mu,\alpha,\beta$. 
The matrix element
\begin{equation*}
\langle \alpha|\otimes \langle\beta
|V'\Bigl({{\kern-3pt\half\phantom{0}}\atop {\theta_\infty\phantom{h0}\theta_\infty+\textstyle{\frac{\e'}{2}}}};x_1\Bigr)
V\Bigl({\phantom{00}{\theta_1}\atop {\theta_\infty+\textstyle{\frac{\e'}{2}}
\phantom{h0}\sigma
}};x_2\Bigr)
|\lambda\rangle\otimes|\mu\rangle
\end{equation*}
can be read off from 
the coefficient of 
$t^{|\lambda|+|\mu|}x_3^{-|\alpha|-|\beta|}$ 
in the 6 point conformal block
\begin{align*}
\F\Bigl(
{
{\phantom{00} \theta_2\phantom{h000}
\frac{1}{2}\phantom{h0}\phantom{h0}\theta_1
\phantom{h0}
\phantom{h0}\theta_t\phantom{h0}}
\atop
{\theta_3\phantom{h0}\theta_\infty
\phantom{h0}\theta_\infty+\frac{\e'}{2}\phantom{h0}
\phantom{h_0}
\sigma\phantom{h0}\theta_{0}}
}
;x_3,x_1,x_2,
t
\Bigr)\,.  
\end{align*}  
Formula \eqref{block} tells that 
it is a sum over a pair of partitions. 
By using $N_{\lambda,\mu}(1)=\delta_{\lambda,\mu}N_{\lambda,\lambda}(1)$,  
the sum can be reduced further to one over a single partition.
We thus find that 
 Theorem \ref{thm-braiding} is equivalent to the following proposition.  

\begin{prop}\label{prop-braiding-relation}
For any partitions $\lambda$, $\mu$, $\alpha$ 
and $\beta$, we have 
\begin{align}
&q^{\theta_\infty^2}\left(\frac{q^{2\theta_1} x_2}{x_1}\right)^{\e'\theta_\infty+1/4}
\frac{
\prod_{\e''=\pm}
\Gamma_q(\half+\e'\theta_\infty-\theta_1+\e'' \sigma)
}{\Gamma_q(1+2\e'\theta_\infty)}X_{\lambda,\mu,\alpha,\beta}
^{\e'}(\theta_\infty,\theta_1,\sigma;x_1,x_2)\label{eq-general-partitions}
\\
&=\sum_{\e=\pm}q^{\sigma^2}\left(\frac{q x_1}{x_2}\right)^{\e\sigma+1/4}
\frac{
\prod_{\e''=\pm}\Gamma_q(\half+\e''
\theta_\infty-\theta_1+\e \sigma)
}{\Gamma_q(1+2\e\sigma)}
 Y_{\lambda,\mu,\alpha,\beta}
^{\e}(\theta_\infty,\theta_1,\sigma;x_1,x_2)\nn
\\
& \times \B_{\e,\e'}\left[\begin{matrix}
		  \theta_1 & \half \\
                  \theta_\infty & \sigma\\
		 \end{matrix}
\Bigl| \frac{x_2}{x_1}\right]
q^{\theta_1^2-\theta_1/2}\left(\frac{x_2}{x_1}\right)^{\theta_1}, \nn
\end{align}
where 
\begin{align*}
&X_{\lambda,\mu,\alpha,\beta}
^{+}(\theta_\infty,\theta_1,\sigma;x_1,x_2)
\\
&=N_{\alpha,\beta}(q^{2\theta_\infty})
N_{\beta,\lambda}
(q^{-\theta_\infty-1/2-\theta_1-\sigma})
N_{\beta,\mu}
(q^{-\theta_\infty-1/2-\theta_1+\sigma})
\\
&\times \sum_{\eta\in\Lambda}
\left(\frac{1}{x_2}\right)^{
|\lambda|+|\mu|}
\left(\frac{q^{2\theta_1} x_2}{x_1}\right)^{|\eta|+|\beta|}
(qx_1)^{|\alpha|+|\beta|
}
\frac{
N_{\alpha,\eta}(q^{-1})
N_{\eta,\lambda}
(q^{\theta_\infty+1/2-\theta_1-\sigma})
N_{\eta,\mu}
(q^{\theta_\infty+1/2-\theta_1+\sigma})}
{N_{\eta,\beta}
(q^{2\theta_\infty+1})
N_{\eta,\eta}(1)},\nonumber
\\
&X_{\lambda,\mu,\alpha,\beta}
^{-}(\theta_\infty,\theta_1,\sigma;x_1,x_2)=X_{\lambda,\mu,\beta,\alpha}
^{+}(-\theta_\infty,\theta_1,\sigma;x_1,x_2), 
\\
&Y_{\lambda,\mu,\alpha,\beta}
^{+}(\theta_\infty,\theta_1,\sigma;x_1,x_2)
\\
&=N_{\lambda,\mu}(q^{2\sigma})
N_{\alpha,\lambda}
(q^{\theta_\infty-\theta_1-\sigma-1/2})
N_{\beta,\lambda}
(q^{-\theta_\infty-\theta_1-\sigma-1/2})
\\
&\times 
\sum_{\eta\in\Lambda}
\left(\frac{1}{x_1}\right)^
{|\lambda|+|\mu|}
\left(\frac{q x_1}{x_2}\right)^{|\lambda|+|\eta|}
(q^{2\theta_1}x_2)^{|\alpha|+|\beta|}
\frac{
N_{\eta,\mu}(q^{-1})
N_{\alpha,\eta}
(q^{\theta_\infty-\theta_1+\sigma+1/2})
N_{\beta,\eta}
(q^{-\theta_\infty-\theta_1+\sigma+1/2})
}
{N_{\lambda,\eta}(q^{2\sigma+1})
N_{\eta,\eta}(1)},
\\
&Y_{\lambda,\mu,\alpha,\beta}
^{-}(\theta_\infty,\theta_1,\sigma;x_1,x_2)=
Y_{\mu,\lambda,\alpha,\beta}
^{+}(\theta_\infty,\theta_1,-\sigma;x_1,x_2).
\\
\end{align*}
\qed
\end{prop}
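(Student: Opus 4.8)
The plan is to prove the combinatorial identity \eqref{eq-general-partitions} by direct manipulation of the partition sums; by the reduction recalled just above (which uses $N_{\lambda,\mu}(1)=\delta_{\lambda,\mu}N_{\lambda,\lambda}(1)$ to collapse one of the two summation partitions) this is equivalent to Theorem~\ref{thm-braiding}. Fix the external partitions $\lambda,\mu,\alpha,\beta$ and the parameters $\theta_\infty,\theta_1,\sigma$. First I would normalise both sides: using \eqref{rule1} to bring every Nekrasov factor into a uniform shape and \eqref{rule2} to absorb the norms $N_{\eta,\eta}(1)$, the inner $\eta$-sum on the left of \eqref{eq-general-partitions} becomes, up to an elementary monomial prefactor, a single $q$-series in $z:=x_2/x_1$ of basic-hypergeometric type expanded about $z=0$, while the right side is a sum over $\e=\pm$ of two $q$-series of the same type expanded about $z=\infty$, weighted by the braiding matrix $\B_{\e,\e'}$, which by \eqref{braid-q-per} is $q$-periodic in $z$. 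Thus \eqref{eq-general-partitions} is a Heine-type connection formula relating the $z=0$ expansion to the two $z=\infty$ expansions of these partition-dressed $q$-series.

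To establish that connection formula I would run the $q$-difference version of the standard connection-coefficient argument for hypergeometric-type equations. The key point is that all four functions $X^{+},X^{-},Y^{+},Y^{-}$, for the fixed external data and viewed as functions of $z$, satisfy one and the same second-order linear $q$-difference equation in $z$: exactly two solution branches occur at each of $z=0$ and $z=\infty$, labelled by the sign of $\e'$ and of $\e$ respectively, which is why the right side of \eqref{eq-general-partitions} is a two-term sum. Granting this, $\{Y^{+},Y^{-}\}$ is a basis of the two-dimensional solution space over the field of $q$-periodic functions, so $X^{\e'}=\sum_{\e}c_{\e,\e'}(z)\,Y^{\e}$ with $c_{\e,\e'}$ necessarily $q$-periodic in $z$. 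One then pins $c_{\e,\e'}$ down by matching the leading terms of the expansions at $z=0$ and at $z=\infty$; this presents $c_{\e,\e'}$ as a ratio of $\Gamma_q$-values times a single $q$-periodic factor of $z$, which the reflection formula $\Gamma_q(u)\Gamma_q(1-u)\propto q^{u(u-1)/2}/\vartheta(u)$ converts into \eqref{braid}. The periodicities \eqref{braid-q-per}, \eqref{braid-period} of the resulting matrix, together with the fact that specialising $\lambda=\mu=\alpha=\beta=\emptyset$ degenerates \eqref{eq-general-partitions} to the classical connection formula between the two ${}_2\phi_1$-bases of Heine's $q$-hypergeometric equation (which already reproduces \eqref{braid}), serve as consistency checks.

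I expect the main obstacle to be the derivation of that common second-order $q$-difference equation for the \emph{non-vacuum} matrix elements. In the continuous theory it is supplied for free by the BPZ null-vector decoupling carried by the degenerate $(2,1)$ field, valid in any correlator containing it; here fusion is unavailable, so the equation must be obtained by hand from the construction of Section~2, exploiting that $V'$ is the renormalised $\e\to0$ limit of $V$ at $\theta_2=\half$---a value at which an intermediate $\V$-module becomes reducible---and transporting the associated singular vector through the intertwining relations of the trivalent vertices of \cite{AFS}, while tracking how this limit interacts with the additional insertions that produce the partitions $\lambda,\mu,\alpha,\beta$. If that proves too awkward---and this is probably the more faithful reading of a genuinely combinatorial proof---one instead proves the partition-dressed Heine transformation of the second paragraph head-on, by induction on $|\lambda|+|\mu|+|\alpha|+|\beta|$ or by a generating-function argument, taking as the sole analytic input a single scalar basic-hypergeometric transformation and as the sole combinatorial input the Nekrasov-factor identities \eqref{rule1}, \eqref{rule2}; the delicate part is then to verify that the scalar transformation survives the partition dressing in exactly the form that \eqref{eq-general-partitions} demands.
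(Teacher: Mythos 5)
Your first route---showing that $X^{\pm}$ and $Y^{\pm}$ satisfy a common second-order linear $q$-difference equation in $z=x_2/x_1$ and then computing connection coefficients---is precisely the step the authors state they were unable to carry out: the paper notes that for the vacuum matrix element the braiding relation follows from the classical connection formula, but that it is not clear whether the general case can be deduced from this and the intertwining relation. You correctly identify the absence of a null-vector/fusion mechanism as the obstruction, but you do not resolve it, so that route remains a plan rather than a proof. Your fallback---induction on the partitions with the scalar ${}_2\phi_1$ connection formula as the only analytic input---is indeed the shape of the paper's argument, but your claim that \eqref{rule1}, \eqref{rule2} suffice as combinatorial input substantially understates what is needed, and the proposal is missing the one idea that makes the induction close.

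That idea is Lemma \ref{lem-lambda}: stripping the first column of $\lambda$ (the operation $\lambda\mapsto\bar\lambda=(\lambda_1-1,\dots,\lambda_{\ell(\lambda)}-1)$) expresses $X^{\e'}_{\lambda,\mu,\alpha,\beta}(\theta_\infty,\theta_1,\sigma)$ and $Y^{\e}_{\lambda,\mu,\alpha,\beta}(\theta_\infty,\theta_1,\sigma)$ as a \emph{common} first-order $q$-shift operator in $x_1$ (of the form $aT_1-1$ with the same $a$ on both sides after normalization) applied to $X^{\e'}_{\bar\lambda,\mu,\alpha,\beta}$, resp.\ $Y^{\e}_{\bar\lambda,\mu,\alpha,\beta}$, at the shifted parameters $(\theta_\infty,\theta_1+\half,\sigma+\half)$; since the braiding matrix is $q$-periodic in $x_2/x_1$ and $1$-periodic in the $\theta$'s, the identity \eqref{eq-general-partitions} for $\lambda$ then follows from that for $\bar\lambda$. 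Establishing this requires the full battery of Nekrasov-factor identities \eqref{eq-N-q-1}--\eqref{eq-N-q-6} of Lemma \ref{lem-N-factors}, the classification of the $\eta$ with $N_{\eta,\lambda}(q^{-1})\neq0$ as $\eta=r_n(\lambda)$, and the transpose symmetry $N_{\lambda',\mu'}(u)=N_{\mu,\lambda}(u)$ of Lemma \ref{lem-Nfactor-symmetry}; the last is what converts the reduction in $\lambda$ into reductions in $\mu$, $\alpha$, $\beta$ via the duality $X_{\lambda,\mu,\alpha,\beta}^{\e}(\theta_\infty,\theta_1,\sigma;x_1,x_2)=Y_{\beta',\alpha',\mu',\lambda'}^{\e}(\sigma,\theta_1,\theta_\infty;(qx_1)^{-1},(q^{2\theta_1}x_2)^{-1})$ together with the elementary sign and $\sigma\mapsto-\sigma$, $\theta_\infty\mapsto-\theta_\infty$ symmetries. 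Without this column-stripping/parameter-shift mechanism and the accompanying symmetries, an ``induction on $|\lambda|+|\mu|+|\alpha|+|\beta|$'' has no inductive step, so the proposal as written does not yet constitute a proof.
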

In the following, we show that the 
identity \eqref{eq-general-partitions} is 
reduced to the connection formula of basic hypergeometric functions. 
First, we prepare several lemmas.

\begin{lem}\label{lem-Nfactor-symmetry}
For any $\lambda, \mu\in\Lambda$ 
we have $N_{\lambda',\mu'}(u)=N_{\mu,\lambda}(u)$.
\qed
\end{lem}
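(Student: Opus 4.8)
The plan is to turn the statement into a purely combinatorial identity of multisets attached to the Young diagrams of $\lambda$ and $\mu$, and then prove that identity by a box‑counting argument. First I would record the elementary transpose rules: for a partition $\nu$ and a box $\square=(i,j)$, writing $\square^{t}=(j,i)$, one has $\square\in\nu\iff\square^{t}\in\nu'$, and since $(\nu')'=\nu$ also $a_{\nu'}(\square^{t})=\ell_\nu(\square)$ and $\ell_{\nu'}(\square^{t})=a_\nu(\square)$. Substituting these into the definition of $N_{\lambda',\mu'}(u)$, the product over $\square\in\lambda'$ becomes a product over $\square\in\lambda$ involving the arm of $\lambda$ and the leg of $\mu$, and the product over $\square\in\mu'$ becomes a product over $\square\in\mu$ involving the arm of $\mu$ and the leg of $\lambda$, giving
\[
N_{\lambda',\mu'}(u)=\prod_{\square\in\lambda}\bigl(1-q^{-a_\lambda(\square)-\ell_\mu(\square)-1}u\bigr)\prod_{\square\in\mu}\bigl(1-q^{a_\mu(\square)+\ell_\lambda(\square)+1}u\bigr).
\]
Comparing with $N_{\mu,\lambda}(u)=\prod_{\square\in\mu}\bigl(1-q^{-\ell_\mu(\square)-a_\lambda(\square)-1}u\bigr)\prod_{\square\in\lambda}\bigl(1-q^{a_\mu(\square)+\ell_\lambda(\square)+1}u\bigr)$ and observing that, for a box $\square=(i,j)$, the two exponent types are the same functions $\phi(i,j)=-a_\lambda(i,j)-\ell_\mu(i,j)-1=i+j-\lambda_i-\mu'_j-1$ and $\psi(i,j)=a_\mu(i,j)+\ell_\lambda(i,j)+1=\mu_i+\lambda'_j-i-j+1$ (valid for any box under the convention $\nu_k=0$ for $k>\ell(\nu)$), and that $e\mapsto q^{-e}$ is injective since $|q|<1$, the lemma is equivalent to the multiset equality
\[
\{\phi(\square):\square\in\lambda\}\uplus\{\psi(\square):\square\in\mu\}=\{\phi(\square):\square\in\mu\}\uplus\{\psi(\square):\square\in\lambda\}.
\]

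To prove this multiset identity I would first note that each box $\square\in\lambda\cap\mu$ contributes the pair $\{\phi(\square),\psi(\square)\}$ to \emph{both} sides, so those terms cancel and it remains to treat the skew regions, i.e.\ to show $\{\phi(\square):\square\in\lambda\setminus\mu\}\uplus\{\psi(\square):\square\in\mu\setminus\lambda\}=\{\phi(\square):\square\in\mu\setminus\lambda\}\uplus\{\psi(\square):\square\in\lambda\setminus\mu\}$. This sub-claim is symmetric under $\lambda\leftrightarrow\mu$ (the swap sends $\phi$ to $-\psi$ and $\psi$ to $-\phi$ and interchanges the two skew regions), so the two sides play symmetric roles. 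I would then construct an explicit bijection of $(\lambda\setminus\mu)\sqcup(\mu\setminus\lambda)$ with itself matching the two multisets, obtained by following the boundary lattice path separating the part of the plane covered by $\lambda$ from that covered by $\mu$ and pairing a box with its mirror image across that path. Alternatively one can pass to generating functions and prove $\sum_{\square\in\lambda}q^{\phi(\square)}-\sum_{\square\in\mu}q^{\phi(\square)}=\sum_{\square\in\lambda}q^{\psi(\square)}-\sum_{\square\in\mu}q^{\psi(\square)}$ by expanding each sum row by row, using partial sums such as $\sum_{k>0}\bigl([k\le\lambda_i]-[k\le\mu_i]\bigr)=\lambda_i-\mu_i$, and telescoping along rows and columns so that the cross terms cancel.

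The routine parts are the transpose rules and the cancellation over $\lambda\cap\mu$. The main obstacle is the skew-region identity: since $\phi$ couples the row lengths of $\lambda$ with the column lengths of $\mu$ while $\psi$ does the reverse, the matching is genuinely global and does not respect individual rows or columns — already for $\lambda=(2,2)$, $\mu=(1)$ the induced bijection on $\lambda\setminus\mu$ is a nontrivial $3$-cycle rather than an involution. Making the boundary-path bijection precise, or organizing the telescoping so that all cross terms cancel, is where the real work lies.
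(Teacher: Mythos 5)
Your setup is correct and coincides with the paper's: using $a_{\nu'}(j,i)=\ell_\nu(i,j)$ and $\ell_{\nu'}(j,i)=a_\nu(i,j)$ you rewrite $N_{\lambda',\mu'}(u)$ as a product over boxes of $\lambda$ and $\mu$ with arms and legs interchanged, cancel the contribution of $\lambda\cap\mu$ against $N_{\mu,\lambda}(u)$, and are left with exactly the skew-region identity that the paper isolates as \eqref{eq-Nfactor-sym} (the paper states it over $\lambda\setminus\mu$ alone for arbitrary $\lambda,\mu$ and applies it twice with the roles of $\lambda$ and $\mu$ exchanged, which is equivalent to your joint statement over both skew regions).

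But at that point your proof stops, and the remaining multiset identity is the entire content of the lemma. You offer two unexecuted strategies for it. The first --- pairing each box with its ``mirror image across the boundary path'' --- cannot be literally correct, as you yourself observe: for $\lambda=(2,2)$, $\mu=(1)$ the required matching of $\{\phi(\square)\}$ with $\{\psi(\square)\}$ is a $3$-cycle, not an involution, so no reflection realizes it and a genuinely different bijection would have to be produced. The second --- telescoping $\sum_{\square\in\lambda}q^{\phi(\square)}-\sum_{\square\in\mu}q^{\phi(\square)}$ row by row --- is viable in principle (it is the classical arm/leg symmetry underlying Nekrasov factors), but the terms $q^{-\mu'_j}$ sitting inside a sum over rows of $\lambda$ do not disappear under the naive partial-sum manipulation you indicate, and you give no argument that the cross terms cancel. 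Since you explicitly flag this step as where the real work lies, the proof has a genuine gap exactly there. For comparison, the paper closes it by induction on $\ell(\lambda)$: it first reduces to the interleaving configuration $\lambda_1>\mu_1$, $\lambda_i>\mu_{i-1}$ by cutting $\lambda$ and $\mu$ into blocks on which the relevant arms and legs are unchanged, and then computes explicitly how both sides of \eqref{eq-Nfactor-sym} change when the last row of $\lambda$ is removed, checking that the two changes agree.
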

\begin{proof}
Since 
$a_{\lambda'}(j,i)=\ell_\lambda(i,j)$ for all $(i,j)\in\Z^2_{>0}$, we have
\begin{align*}
N_{\lambda', \mu'}(u)
=\prod_{(i,j)\in\lambda} (1-q^{-a_\lambda(i,j)-\ell_\mu(i,j)-1}u)
\prod_{(i,j)\in\mu} (1-q^{a_\mu(i,j)+\ell_\lambda(i,j)+1}u). 
\end{align*}
Therefore, to prove the Lemma it suffices to show that the identity
\begin{equation}\label{eq-Nfactor-sym}
\prod_{(i,j)\in\lambda, (i,j)\notin \mu} (1-q^{-a_\lambda(i,j)-\ell_\mu(i,j)-1}u)=
\prod_{(i,j)\in\lambda, (i,j)\notin \mu} (1-q^{a_\mu(i,j)+\ell_\lambda(i,j)+1}u)  
\end{equation}
holds for all $\lambda,\mu\in\Lambda$. 
Note that  the product consists of all entries ($i,j$) such that 
$\lambda_i>\mu_i$ and $\mu_i+1\le j\le \lambda_i$.  
We show \eqref{eq-Nfactor-sym}
by induction with respect to the length of $\lambda$. 

Consider a sub-sequence $\lambda_{n+1},\lambda_{n+2},\ldots, \lambda_{n+s}$ in $\lambda$ such that 
\begin{equation*}
\lambda_{n}\le \mu_{n},\quad 
\lambda_i>\mu_i\quad (i=n+1,\ldots,n+s),\quad 
\lambda_{n+s+1}\le \mu_{n+s+1}. 
\end{equation*}
Put $\tilde{\lambda}=(\lambda_{n+1},\ldots, \lambda_{n+s})$ and 
$\tilde{\mu}=(\mu_{n+1},\ldots, \mu_{n+s})$. 
Then we have for $i=1,\ldots,s$ and $\tilde{\mu}_i+1\le j\le \tilde{\lambda}_i$
\begin{equation*}
\ell_\mu(n+i,j)=\ell_{\tilde{\mu}}(i,j), \quad \ell_\lambda(n+i,j)=\ell_{\tilde{\lambda}}(i,j), 
\end{equation*}
because $\lambda_{n}\le \mu_{n}$ and $\lambda_{n+s+1}\le \mu_{n+s+1}$. 
We further divide a partition $\tilde{\lambda}$ into sub-sequences 
$\tilde{\lambda}_{m+1},\ldots,\tilde{\lambda}_{m+r}$ such that 
\begin{align*}
\tilde{\lambda}_{m+1}\le \tilde{\mu}_{m},\quad 
\tilde{\lambda}_i>\tilde{\mu}_{i-1}\quad (i=m+2,\ldots,m+r),\quad 
\tilde{\lambda}_{m+r+1}\le \tilde{\mu}_{n+r}. 
\end{align*}
Put $\hat{\lambda}=(\tilde{\lambda}_{m+1},\ldots, \tilde{\lambda}_{m+r})$ and 
$\hat{\mu}=(\tilde{\mu}_{m+1},\ldots, \tilde{\mu}_{m+r})$. Then we have again  
for $i=1,\ldots,r$ and $\hat{\mu}_i+1\le j\le \hat{\lambda}_i$
\begin{equation*}
\ell_{\tilde{\mu}}(m+i,j)=\ell_{\hat{\mu}}(i,j), \quad \ell_{\tilde{\lambda}}(m+i,j)
=\ell_{\hat{\lambda}}(i,j), 
\end{equation*}
because $\tilde{\lambda}_{m+1}\le \tilde{\mu}_{m}$ and 
$\tilde{\lambda}_{m+r+1}\le \tilde{\mu}_{m+r}$.
Replacing $\lambda,\mu$ by $\hat{\lambda},\hat\mu$ 
we may assume without loss of generality that 
\begin{equation}\label{la-mu}
\lambda_1>\mu_1\,,\quad \lambda_i>\mu_{i-1}\quad i=2,\ldots, \ell(\lambda).
\end{equation}

If $\ell(\lambda)=1$, it is easy to check \eqref{eq-Nfactor-sym}.

Suppose 
\eqref{eq-Nfactor-sym} is true for $\ell(\lambda)=\ell-1$. 
Let $\lambda$ satisfy $\ell(\lambda)=\ell$.
Assuming \eqref{la-mu}, 
we choose the number $k$ such that $\mu_{\ell-(k+1)}>\lambda_\ell\ge\mu_{\ell-k}$.

Writing $\lambda=(\nu,\lambda_\ell)$, 
we compare \eqref{eq-Nfactor-sym} for $\lambda$ and for $\nu$.
In the left hand side of \eqref{eq-Nfactor-sym}, 
the factors for $i<\ell$ remain the same as for $\nu$.  
The product of factors for $i=\ell$ is expressed as 
\begin{equation}\label{eq-Nfactor-LHS-l}
\frac{\prod_{j=0}^{\lambda_\ell-\mu_\ell+k-1}\left(1-q^{k-j}u\right)}
{\prod_{j=1}^k(1-q^{-\lambda_\ell+\mu_{\ell-j}+j}u)}.
\end{equation}

In the right hand side of \eqref{eq-Nfactor-sym}, the factors for $j> \lambda_\ell$ 
remain the same as for $\nu$.  
The product of factors for $i=\ell$ is equal to 
\begin{equation*}
\prod_{j=0}^{\lambda_\ell-\mu_\ell-1}\left(1-q^{-j}u\right).
\end{equation*}
The products of factors for $i=\ell-k,\ldots,\ell-1$ and $j\le \lambda_\ell$ are equal to 
\begin{equation*}
\prod_{m=1}^{\lambda_\ell-\mu_i}\left( 1-q^{\ell-i-m}u\right)
\end{equation*}
before adding the $\ell$-th row, and after they become
\begin{equation*}
\prod_{m=0}^{\lambda_\ell-\mu_i-1}\left(1-q^{\ell-i-m}u\right).
\end{equation*} 
Namely, we lose the factor $\prod_{i=\ell-k}^{\ell-1}(1-q^{-\lambda_\ell+\mu_i+\ell-i}u)$   
and gain the factor $\prod_{m=1}^k(1-q^{m}u)$. Therefore, the outcome can be written as 
\eqref{eq-Nfactor-LHS-l}. 
\end{proof}

We introduce the following operations on partitions $\lambda\in \Lambda$:
\begin{align}
\bar{\lambda}=&(\lambda_1-1,\ldots, \lambda_{\ell(\lambda)}-1)\,,  
\label{lambda-bar}\\
r_n(\lambda)=&(\lambda_1+1,\ldots, \lambda_n+1,\lambda_{n+2},\ldots)\quad
(n\in\Z_{\ge0}).
\label{rn}
\end{align}
Here we identify a partition $\lambda$ with a sequence 
$(\lambda_1,\ldots,\lambda_{\ell(\lambda)},0,0,\ldots)$.

\begin{lem}
Let $\lambda,\eta$ be partitions. Then $N_{\eta,\lambda}(q^{-1})\neq 0$ 
if and only if $\eta=r_n(\lambda)$ for some $n\ge 0$. 
\qed
\end{lem}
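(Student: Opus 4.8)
The plan is to analyze when the Nekrasov factor $N_{\eta,\lambda}(q^{-1})$ can vanish by inspecting its product formula
\[
N_{\eta,\lambda}(q^{-1})
=\prod_{\square\in\eta}(1-q^{-\ell_\eta(\square)-a_{\lambda}(\square)-1}\cdot q^{-1})
\prod_{\square\in\lambda}(1-q^{a_{\eta}(\square)+\ell_\lambda(\square)+1}\cdot q^{-1})\,,
\]
where in the first product $a_\lambda(\square)$ is computed relative to $\lambda$ for boxes $\square=(i,j)$ of $\eta$ (with $\lambda_i=0$ if $i>\ell(\lambda)$), and symmetrically in the second. A factor vanishes exactly when the exponent of $q$ is zero, i.e. when some box of $\eta$ has $\ell_\eta(\square)+a_\lambda(\square)+2=0$, or some box of $\lambda$ has $a_\eta(\square)+\ell_\lambda(\square)=0$. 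Since $\ell_\eta(\square)\ge 0$ and $\ell_\lambda(\square)\ge0$ always hold for boxes in $\eta$ resp. $\lambda$, the first condition forces $a_\lambda(\square)=-2$ and $\ell_\eta(\square)=0$, while the second forces $a_\eta(\square)=0$ and $\ell_\lambda(\square)=0$. So $N_{\eta,\lambda}(q^{-1})\ne 0$ iff neither of these combinatorial configurations occurs.

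The second condition is the easier one to unpack: $\ell_\lambda(\square)=0$ means $\square=(i,j)$ lies in the bottom row of its column of $\lambda$, i.e. $\lambda'_j=i$, and $a_\eta(\square)=0$ means $j=\eta_i$, the last box of row $i$ of $\eta$. The first condition: for a box $\square=(i,j)\in\eta$, $\ell_\eta(\square)=0$ means $\eta'_j=i$ (bottom of its column in $\eta$), and $a_\lambda(\square)=-2$ means $\lambda_i=j-2$, so the box sits two columns to the right of the end of row $i$ of $\lambda$. I will translate "$N_{\eta,\lambda}(q^{-1})\ne 0$" into the statement that $\eta$ is obtained from $\lambda$ by adding a single corner-box to each of the first $n$ rows for some $n\ge0$ — equivalently $\eta=r_n(\lambda)$ in the notation \eqref{rn}. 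The heart of the argument is: (i) if $\eta=r_n(\lambda)$ then check directly that no box of $\eta$ has $\ell_\eta=0,a_\lambda=-2$ and no box of $\lambda$ has $\ell_\lambda=0,a_\eta=0$, using $\eta_i=\lambda_i+1$ for $i\le n$ and $\eta_i=\lambda_{i+1}$ for $i>n$; (ii) conversely, if $\eta\ne r_n(\lambda)$ for every $n$, produce an offending box.

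For the converse I would argue by comparing the profiles of $\lambda$ and $\eta$ row by row. If $\eta_i\le\lambda_i$ for some $i$ with $\eta_i<\lambda_{i-1}+1$-type slack, one locates a box in $\lambda\setminus\eta$ realizing $a_\eta(\square)=0$, $\ell_\lambda(\square)=0$; if instead $\eta$ sticks out too far to the right in some row, one locates a box in $\eta\setminus\lambda$ realizing $\ell_\eta(\square)=0$, $a_\lambda(\square)=-2$. The bookkeeping is the kind of "interlacing of partition shapes" analysis that is standard but fiddly: the main obstacle will be organizing the case distinctions so that exactly the shapes $r_n(\lambda)$ survive, and in particular handling the boundary rows $i=n$ and $i=n+1$ where $\eta$ switches from "$\lambda_i+1$" to "$\lambda_{i+1}$", as well as the degenerate cases $\lambda=\emptyset$ or $\eta$ with a row equal to some $\lambda_i$ exactly. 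A clean way to finish is to note that $r_n(\lambda)$ is the unique partition $\eta\supseteq$ (a shift of) $\lambda$ whose skew diagram $\eta/(\text{the obvious sub-shape})$ is a horizontal strip of a very constrained type, and that the vanishing/non-vanishing of $N_{\eta,\lambda}(q^{-1})$ detects precisely this, so that both implications reduce to the direct box computations in (i) and (ii) above.
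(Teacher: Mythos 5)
There is a genuine error at the foundation of your reduction. You claim that since $\ell_\eta(\square)\ge 0$, a vanishing factor in the first product forces $(\ell_\eta(\square),a_\lambda(\square))=(0,-2)$, and that a vanishing factor in the second forces $(a_\eta(\square),\ell_\lambda(\square))=(0,0)$. This is false: the condition $\ell_\eta(\square)+a_\lambda(\square)+2=0$ only forces $a_\lambda(\square)=-\ell_\eta(\square)-2\le -2$, and $a_\eta(\square)+\ell_\lambda(\square)=0$ only forces $a_\eta(\square)=-\ell_\lambda(\square)\le 0$; because $a_\lambda$ (resp. $a_\eta$) is evaluated at boxes that need not belong to $\lambda$ (resp. $\eta$), these arm lengths can be any negative integer. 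A concrete counterexample to your criterion: take $\lambda=(2,2)$, $\eta=(1)$. No box of $\eta$ has $(\ell_\eta,a_\lambda)=(0,-2)$ (the only box has $a_\lambda=1$) and no box of $\lambda$ has $(a_\eta,\ell_\lambda)=(0,0)$ (the four boxes give $(0,1),(-1,1),(-1,0),(-2,0)$), so your test declares $N_{(1),(2,2)}(q^{-1})\neq 0$; yet the box $(1,2)\in\lambda$ has $a_\eta+\ell_\lambda=-1+1=0$, so the corresponding factor $1-q^{a_\eta+\ell_\lambda+1}q^{-1}$ vanishes and $N_{(1),(2,2)}(q^{-1})=0$ — consistent with the lemma, since $(1)\neq r_n((2,2))$ for any $n$. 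The paper's own argument for the ``only if'' direction works with the full non-vanishing conditions $\ell_\eta(\square)+a_\lambda(\square)+2\neq 0$ and $\ell_\lambda(\square)+a_\eta(\square)\neq 0$, and its step (iv) exhibits precisely a box with $(a_\eta,\ell_\lambda)=(-1,1)$, a configuration your restricted criterion cannot detect. (Your ``if'' direction is likewise under-checked as written, though it can be repaired by noting that for $\eta=r_n(\lambda)$ one has $a_\lambda(\square)\ge -1$ on all of $\eta$.)

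Beyond this, the proposal is a plan rather than a proof: the converse is left as ``standard but fiddly'' interlacing bookkeeping whose organization you yourself identify as the main obstacle. That bookkeeping is the entire content of the lemma. The paper carries it out by deriving four structural facts from the non-vanishing conditions — rows of $\eta$ below $\ell(\lambda)$ have length $1$; $\eta_i\le\lambda_i+1$; if $\eta_i\le\lambda_i$ then $i<\ell(\lambda)$ and $\eta_i\le\lambda_{i+1}$; and $\eta_i\ge\lambda_{i+1}$ — each proved by exhibiting an offending box, and then assembling them into $\eta=r_n(\lambda)$. Finally, a small slip in your prose: $r_n(\lambda)$ is not obtained by merely adding a corner box to each of the first $n$ rows; it also deletes row $n+1$ (your later formula $\eta_i=\lambda_{i+1}$ for $i>n$ is the correct one).
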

\begin{proof}
It is easy to show the `if' part.

To show the `only if' part, assume that
$\eta$ satisfies $N_{\eta,\lambda}(q^{-1})\neq 0$. 
Then we have
\begin{align}
&\ell_\eta(i,j)+a_\lambda(i,j)+2\neq0\qquad 
\forall (i,j)\in \eta\,,
\label{eq-eta-lambda-1}\\
&\ell_\lambda(i,j)+a_\eta(i,j)\neq 0\qquad 
\forall (i,j)\in\lambda\,. 
\label{eq-eta-lambda-2}
\end{align}

Using them we show
\begin{enumerate}
\item If $\ell(\lambda)<i\le \ell(\eta)$ then $\eta_i=1$.
\item If $i\le \min(\ell(\lambda),\ell(\eta))$ then $\eta_i\le\lambda_i+1$,
\item If $i\le \min(\ell(\lambda),\ell(\eta))$ and $\eta_i\le\lambda_i$,
then $i<\ell(\lambda)$ and $\eta_i\le \lambda_{i+1}$. 
\item If $i\le \min(\ell(\lambda)-1,\ell(\eta))$, 
then $\eta_i\ge \lambda_{i+1}$. 
\end{enumerate}
To see (i), 
suppose there exists such  an $i$  that 
$\ell(\lambda)<i\le \ell(\eta)$ and $\eta_i\ge2$. Then
 $(\eta_2',2)\in\eta$, $\ell_\eta(\eta_2',2)=0$ and 
$a_\lambda(\eta_2',2)=-2$, in contradiction with \eqref{eq-eta-lambda-1}. 
Since $\eta_i>0$ for $i\le \ell(\eta)$, we must have $\eta_i=1$.

To see (ii), suppose there exists such an $i\le \min(\ell(\lambda),\ell(\eta))$ 
that
$\eta_i>\lambda_i+1$. Take the largest such $i$ and let $j=\lambda_i+2$. 
Then $(i,j)\in \eta$, $\ell_\eta(i,j)=0$ and $a_\lambda(i,j)=-2$, 
which contradicts \eqref{eq-eta-lambda-1}.

Under the assumption of (iii), 
we have $(i,\eta_i)\in\lambda$ and $a_\eta(i,\eta_i)=0$.
Further if $i=\ell(\lambda)$, or else $i<\ell(\lambda)$ 
and $\lambda_{i+1}<\eta_i$, then
we have $\ell_\lambda(i,\eta_i)=0$, which contradicts \eqref{eq-eta-lambda-2}.

To show (iv), first assume $i\le \ell(\lambda)-2$ and 
$\eta_i<\lambda_{i+1}$, $\eta_{i+1}=\lambda_{i+2}$ hold. Then 
$(i,\eta_i+1)\in\lambda$, $a_\eta(i,\eta_i+1)=-1$, and 
$\ell_{\lambda}(i,\eta_i+1)=1$, which contradicts 
\eqref{eq-eta-lambda-2}.
Similarly if $i=\ell(\lambda)-1$ and 
$\eta_{\ell(\lambda)-1}<\lambda_{\ell(\lambda)}$, then 
the same conclusion takes place.

Now we prove the Lemma. 
If $\eta_i=\lambda_i+1$ for all $i\le \ell(\lambda)$, then 
together with (i) we have $\eta=r_{\ell(\eta)}(\lambda)$.
Otherwise, from (ii), (iii) and (iv)
there exists an $n$ such that 
 $\eta_i=\lambda_i+1$ ($1\le i\le n$) and 
$\eta_i=\lambda_{i+1}$ for all $n<i\le \ell(\lambda)-1$.
Moreover if $\ell(\eta)\ge\ell(\lambda)$ then 
$\eta_{\ell(\lambda)}\le \eta_{\ell(\lambda)-1}=\lambda_{\ell(\lambda)}$
which contradicts (iii). Therefore $\ell(\eta)=\ell(\lambda)-1$ and
$\eta=r_n(\lambda)$.
\end{proof}

\begin{lem}\label{lem-N-factors}
Let $\lambda,\mu \in \Lambda$ and $n\in\Z_{\ge0}$. 
Using the notation \eqref{lambda-bar} and \eqref{rn}, we set 
\begin{align*}
&\ell=\ell(\lambda)\,,
\quad k=\ell(\mu)\,,
\quad \eta=r_n(\lambda)\,, \quad \gamma=(r_n(\mu'))'\,,
\\ 
&\tilde{\eta}=
\begin{cases}
\bar{\eta} & (n\le \ell-1),\\
(\lambda,1^{n-\ell+1}) & (n\ge \ell).
\end{cases}
\end{align*}
Then we have  
\begin{align}
\frac{N_{\eta,\lambda}(q^{-1})}{N_{\eta,\eta}(1)}=
&
\frac{N_{\tilde{\eta},\bar{\lambda}}(q^{-1})}
{N_{\tilde{\eta},\tilde{\eta}}(1)}
(1-q^{|\tilde{\eta}|-|\lambda|})\,,
\label{eq-N-q-1}
\\
N_{\mu,\eta}(u)=
&
N_{\mu,\tilde{\eta}}(q^{-1}u)
\prod_{j=1}^{\mu_\ell}
\frac{1-q^{j-1}u}{1-q^{-\ell_\mu(\ell,j)+j-2}u}
\prod_{i=1}^{\ell-1}(1-q^{\ell-i+a_\mu(i,1)}u)\,,
\label{eq-N-q-5}
\\
N_{\mu,\lambda}(u)
=&
N_{\mu,\bar{\lambda}}(q^{-1}u)
\prod_{j=1}^{\mu_{\ell+1}}
\frac{1-q^{j-1}u}{1-q^{-\ell_\mu(\ell+1,j)+j-2}u}
\prod_{i=1}^\ell(1-q^{\ell-i+a_\mu(i,1)+1}u)\,,
\label{eq-N-q-4}
\\
\frac{N_{\mu,\lambda}(u)}{N_{\mu,\eta}(qu)}
=&
\frac{N_{\mu,\bar{\lambda}}(q^{-1}u)}{N_{\mu,\tilde{\eta}}(u)}(1-u)\,, 
\label{eq-N-q-2}
\\
\frac{N_{\mu,\lambda}(u)}{N_{\mu,\eta}(qu)}
=&
\frac{N_{\bar{\mu},\lambda}(qu)}{N_{\bar{\mu},\eta}(q^2u)}
\frac{1-q^{|\eta|-|\lambda|+1-k}u}{1-qu}\,,
\label{eq-N-q-3}
\\
N_{\gamma,\lambda}(u)=
&N_{\gamma,\bar{\lambda}}(q^{-1}u)
(1-q^{\ell+|\gamma|-|\mu|-1}u)
\prod_{j=1}^{\mu_\ell}
\frac{1-q^{j-1}u}{1-q^{-\ell_\mu(\ell,j)+j-2}u}
\prod_{i=1}^{\ell-1}(1-q^{\ell-i+a_\mu(i,1)}u)\,.
\label{eq-N-q-6}
\end{align}
\qed
\end{lem}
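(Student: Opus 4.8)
The plan is to deduce all six identities from a single master computation: the effect on $N_{\rho,\nu}(u)$ of removing the first column from the second partition. The only combinatorial input needed is the behaviour of arms and legs under this operation. For a box $(i,j)\in\nu$ with $j\ge2$, the map $(i,j)\mapsto(i,j-1)$ is a bijection onto the boxes of $\bar\nu$, with $a_{\bar\nu}(i,j-1)=a_\nu(i,j)$ and $\ell_{\bar\nu}(i,j-1)=\ell_\nu(i,j)$, while for any third partition $\rho$ one has $a_\rho(i,j-1)=a_\rho(i,j)+1$. For a box $(i,j)$ lying outside $\nu$, replacing $\nu$ by $\bar\nu$ leaves $a_\nu(i,j)=-j$ unchanged when $i>\ell(\nu)$ and decreases it by $1$ when $i\le\ell(\nu)$. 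Conjugation is handled through Lemma~\ref{lem-Nfactor-symmetry}.

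I would first establish the master column-removal identity: for partitions $\nu,\rho$ with $p=\ell(\nu)$,
\begin{align*}
N_{\rho,\nu}(u)=N_{\rho,\bar\nu}(q^{-1}u)\,
\prod_{i=1}^{p}\bigl(1-q^{\,p-i+a_\rho(i,1)+1}u\bigr)\,
\prod_{j=1}^{\rho_{p+1}}\frac{1-q^{\,j-1}u}{1-q^{\,-\ell_\rho(p+1,j)+j-2}u}\,.
\end{align*}
To prove it, split $N_{\rho,\nu}(u)$ into the product over $\square\in\rho$ and the product over $\square\in\nu$. In the $\nu$-product the boxes outside the first column match, after the substitution $u\mapsto q^{-1}u$, the corresponding factors of $N_{\rho,\bar\nu}(q^{-1}u)$ by the bijection above, and the first column of $\nu$ contributes $\prod_{i=1}^{p}(1-q^{\,p-i+a_\rho(i,1)+1}u)$. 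In the $\rho$-product the boxes in rows $i\le p$ likewise match $N_{\rho,\bar\nu}(q^{-1}u)$, whereas the boxes in rows $i>p$ fail to match by a factor $q^{-1}$; collecting these along a fixed column $j$ telescopes to $(1-q^{\,j-1}u)/(1-q^{\,-\ell_\rho(p+1,j)+j-2}u)$, with $j$ ranging over $1\le j\le\rho_{p+1}$. Specializing $(\nu,\rho,p)=(\lambda,\mu,\ell)$ yields \eqref{eq-N-q-4}; taking $(\nu,\rho,p)=(\eta,\mu,\ell-1)$ yields \eqref{eq-N-q-5} in the range $n\le\ell-1$, where $\bar\eta=\tilde\eta$ and $\ell(\eta)=\ell-1$ (the complementary range $n\ge\ell$, in which $\tilde\eta=(\lambda,1^{n-\ell+1})$ differs from $\bar\eta=\lambda$, is a direct check comparing the first columns of $\eta$ and $\tilde\eta$, of heights $n$ and $\ell$). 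Finally $(\nu,\rho,p)=(\lambda,\gamma,\ell)$ gives \eqref{eq-N-q-6}, once the first-column data of $\gamma$ is rewritten through $\gamma'=r_n(\mu')$; this last step is the conjugate of the telescoping already performed for \eqref{eq-N-q-5}, together with $|\gamma|-|\mu|=n-\mu'_{n+1}$, which produces the extra scalar factor $1-q^{\,\ell+|\gamma|-|\mu|-1}u$.

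Identity \eqref{eq-N-q-2} then follows by dividing \eqref{eq-N-q-4} by \eqref{eq-N-q-5} with $u$ replaced by $qu$: the products $\prod_{i=1}^{\ell-1}$ cancel, and the remaining factors collapse — using $\mu'_j=\ell$ for $\mu_{\ell+1}<j\le\mu_\ell$ — to leave exactly $1-u$. For \eqref{eq-N-q-3} one needs the mirror statement, removing the first column from the \emph{first} partition; the same box splitting (or Lemma~\ref{lem-Nfactor-symmetry}) gives, with $k=\ell(\mu)$,
\begin{align*}
\frac{N_{\mu,\nu}(u)}{N_{\bar\mu,\nu}(qu)}=\prod_{i=1}^{k}\bigl(1-q^{\,i-k-\nu_i}u\bigr)\,
\prod_{j=1}^{\nu_{k+1}}\frac{1-q^{\,1-j}u}{1-q^{\,\nu'_j-k-j+1}u}\,,
\end{align*}
and applying this with $\nu=\lambda$ and with $\nu=\eta$ (at $qu$), then dividing, the telescoping — now using $\eta=r_n(\lambda)$ and $|\eta|-|\lambda|=n-\lambda_{n+1}$ — yields $(1-q^{\,|\eta|-|\lambda|+1-k}u)/(1-qu)$. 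For \eqref{eq-N-q-1}, the numerators $N_{\eta,\lambda}(q^{-1})$ and $N_{\tilde\eta,\bar\lambda}(q^{-1})$ are obtained from \eqref{eq-N-q-4}, \eqref{eq-N-q-5} at $u=q^{-1}$ (and, when $n\le\ell-1$, the mirror identity to pass from $\eta$ to $\bar\eta=\tilde\eta$), while the normalizations $N_{\eta,\eta}(1)$ and $N_{\tilde\eta,\tilde\eta}(1)$ are computed by the same column removal from $N_{\eta,\eta}(1)=\prod_{\square\in\eta}(1-q^{h_\eta(\square)})(1-q^{-h_\eta(\square)})$, or via \eqref{rule2}; assembling the pieces and cancelling leaves the factor $1-q^{\,|\tilde\eta|-|\lambda|}$.

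The main obstacle is not conceptual but organizational: one must keep the two-case structure of $\tilde\eta$ (the ranges $n<\ell$ and $n\ge\ell$, and similarly $n<k$ versus $n\ge k$) consistent throughout, and verify in each case that the telescoping products really collapse to the stated closed forms. This is most delicate in \eqref{eq-N-q-3}, where the first-column data of $\eta=r_n(\lambda)$ must be tracked in parallel with that of $\lambda$, and in \eqref{eq-N-q-1}, where the normalizations $N_{\bullet,\bullet}(1)$ must be handled carefully near their zeros. The underlying mechanism — a bijection on the bulk boxes plus a telescoping contribution from the stripped column — is identical in every case.
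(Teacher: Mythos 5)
Your proposal is correct and follows essentially the same route as the paper: both rest on the arm/leg bookkeeping under removal of the first column (your master identity is precisely \eqref{eq-N-q-4} stated for general partitions, and the paper likewise obtains \eqref{eq-N-q-4}, \eqref{eq-N-q-2}, \eqref{eq-N-q-3}, \eqref{eq-N-q-6} by chaining that computation with \eqref{rule1}). The differences are purely organizational — the paper verifies \eqref{eq-N-q-1} and \eqref{eq-N-q-5} by direct case-by-case computation rather than deducing them from a single master identity — and the case checks you defer (the range $n\ge\ell$ in \eqref{eq-N-q-5}, and the stripping of a column from both arguments needed for $N_{\eta,\eta}(1)/N_{\tilde{\eta},\tilde{\eta}}(1)$ in \eqref{eq-N-q-1}) do go through as you indicate.
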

\begin{proof}

By the definition we have 
\begin{align*}
a_{\bar{\lambda}}(i,j)=&a_\lambda(i,j+1)\quad (i=1,\ldots,\ell, \ j\in\Z_{>0}),
\\
\ell_{\bar{\lambda}}(i,j)=&\ell_\lambda(i,j+1)\quad ((i,j)\in \bar{\lambda}),
\\
a_{\tilde{\eta}}(i,j)=&a_\eta(i,j+1)\quad (i=1,\ldots, \min(\ell(\tilde{\eta}),\ell),\ j\in\Z_{>0}),
\\
\ell_{\tilde{\eta}}(i,j)=&
\begin{cases}
\ell_\eta(i,j+1) & ((i,j)\in \tilde{\eta},\ \text{$n\le \ell-1 $ or $n\ge \ell$ and $j\ge2$}),\\
n+1-i & (i>0, j=1, n\ge \ell).\\
\end{cases}
\end{align*}
Using these relations 
we show below \eqref{eq-N-q-1}--\eqref{eq-N-q-6}. 

\bigskip
 
\noindent{\underline{Proof of \eqref{eq-N-q-1}}}.\quad 
First, we consider 
 the case  $n\le \ell-1$. We have  
  \begin{align*}
  \frac{N_{\eta,\lambda}(q^{-1})}{N_{\tilde{\eta},\bar{\lambda}}(q^{-1})}=&
  \frac{\prod_{(i,j)\in\eta}\left(1-q^{-\ell_\eta(i,j)-a_\lambda(i,j)-2}\right)}
  {\prod_{(i,j)\in\tilde{\eta}}\left(1-q^{-\ell_\eta(i,j+1)-a_\lambda(i,j+1)-2}\right)}
  \frac{\prod_{(i,j)\in\lambda}\left(1-q^{\ell_\lambda(i,j)+a_\eta(i,j)}\right)}
  {\prod_{(i,j)\in\bar{\lambda}, i<\ell}\left(1-q^{\ell_\lambda(i,j+1)+a_\eta(i,j+1)}\right)}
\\
&\times 
  \frac{1}{\prod_{j=1}^{\lambda_\ell-1}\left(1-q^{\ell_\lambda(\ell,j+1)-j}\right)}
  \\
=&
\prod_{i=1}^{\ell}\left(1-q^{-\ell+i-\lambda_i}\right)
\prod_{i=1}^{\ell-1}\left(1-q^{\ell-i+\eta_i-1}\right)\,,
  \end{align*}
  and 
 \begin{align*}
 \frac{N_{\tilde{\eta},\tilde{\eta}}(1)}{N_{\eta,\eta}(1)}=&
  \frac{\prod_{(i,j)\in\tilde{\eta}}\left(1-q^{-\ell_\eta(i,j+1)-a_\eta(i,j+1)-1}\right)}
  {\prod_{(i,j)\in\eta}\left(1-q^{-\ell_\eta(i,j)-
a_\eta(i,j)   
-1}\right)}
 \frac{\prod_{(i,j)\in\tilde{\eta}}\left(1-q^{\ell_\eta(i,j+1)+a_\eta(i,j+1)+1}\right)}
  {\prod_{(i,j)\in\eta}\left(1-q^{\ell_\eta(i,j)+
a_\eta(i,j)  
+1}\right)}
  \\
  =&
\prod_{i=1}^{\ell-1}
\frac{1}{\left(1-q^{-\ell+i+1-\eta_i}\right)
\left(1-q^{\ell-i+\eta_i-1}\right)} \,.
 \end{align*}
Since $\eta=(\lambda_1+1,\ldots,\lambda_n+1,\lambda_{n+2},\ldots,\lambda_\ell)$
and  $|\tilde{\eta}|-|\lambda|=-\ell+ n+1-\lambda_{n+1}$, we obtain \eqref{eq-N-q-1} for $n\le \ell-1$.

Second, we consider the case    $n\ge \ell$. 
 We have      
 \begin{align*}
 \frac{N_{\eta,\lambda}(q^{-1})}{N_{\tilde{\eta},\bar{\lambda}}(q^{-1})}=
&
\frac{\prod_{i=1}^{n}\left( 1-q^{-(n-i)-(\lambda_i-1)-2}\right)
\prod_{i=1}^{\ell}\left( 1-q^{-(\ell-i)-(\lambda_i-2)-2}\right)
\left( 1-q^{\ell-i+\lambda_i}\right)} 
{\prod_{i=1}^{\ell}\left( 1-q^{-(n+1-i)-(\lambda_i-2)-2}\right)
\prod_{i=\ell+1}^{n+1}\left( 1-q^{-(n+1-i)-(\lambda_i-1)-2}\right)
}
\\
=&\frac{\prod_{i=1}^\ell\left(1-q^{-\ell+i-\lambda_i}\right)
\left(1-q^{\ell-i+\lambda_i}\right)}  
{(1-q^{-(n-\ell)-1})}, 
 \end{align*}
 and 
 \begin{align*}
 \frac{N_{\tilde{\eta},\tilde{\eta}}(1)}{N_{\eta,\eta}(1)}
 =&\frac{(1-q^{-(n-\ell)-1})(1-q^{n-\ell+1})}
 {\prod_{i=1}^\ell\left(1-q^{-\ell+i-\lambda_i}\right)
\left(1-q^{\ell-i+\lambda_i}\right)  
}. 
 \end{align*}
 Since $|\tilde{\eta}|-|\lambda|=n+1-\ell$, we obtain \eqref{eq-N-q-1} for $n\ge \ell$.
\bigskip

 \noindent{\underline{Proof of \eqref{eq-N-q-5} }}.\quad  
First, we consider the case 
 $n\le \ell-1$. 
 Then by $\ell(\tilde{\eta})\le \ell-1$,   
we obtain 
\begin{align*}
 \frac{N_{\mu,\eta}(u)}{N_{\mu,\tilde{\eta}}(q^{-1}u)}=&
 \prod_{(i,j)\in\mu, i\ge \ell}
\frac{1-q^{-\ell_\mu(i,j)+j-1}u}{1-q^{-\ell_\mu(i,j)+j-2}u}
\prod_{i=1}^{\ell-1}(1-q^{\ell-1-i+a_\mu(i,1)+1}u)
\\
=&\prod_{j=1}^{\mu_\ell}
\frac{1-q^{j-1}u}{1-q^{-\ell_\mu(\ell,j)+j-2}u}
\prod_{i=1}^{\ell-1}(1-q^{\ell-i+a_\mu(i,1)}u). 
 \end{align*} 
 Second, we consider the case 
  $n\ge \ell$.   
 Then, we have $a_{\tilde{\eta}}(i+1,j)=a_\eta(i,j)$ if $\ell<i\le n$ and 
 $\ell_{\tilde{\eta}}(i,1)=\ell_\eta(i,1)+1$ for $i\in\Z_{>0}$. Hence, we obtain
 \begin{align*}
 \frac{N_{\mu,\eta}(u)}{N_{\mu,\tilde{\eta}}(q^{-1}u)}
 =&\frac{\prod_{i=\ell+1}^{\ell(\mu)}
 \prod_{j=\mu_{i+1}+1}^{\mu_i}(1-q^{-a_\eta(i,j)-1}u)}
 {\prod_{j=1}^{\mu_{\ell+1}}(1-q^{-\ell_\mu(\ell+1,j)+j-3}u)}
 \frac{\prod_{i=1}^\ell(1-q^{\ell-i+a_\mu(i,1)}u)}
 {(1-q^{a_\mu(n+1,1)}u)}
 \\
 =&\frac{\prod_{j=\mu_{n+1}+1}^{\mu_\ell}(1-q^{j-2}u)\prod_{j=1}^{\mu_{n+1}}(1-q^{j-1}u)}
 {\prod_{j=1}^{\mu_{\ell}}(1-q^{-\ell_\mu(\ell+1,j)+j-3}u)}
 \frac{\prod_{i=1}^{\ell-1}(1-q^{\ell-i+a_\mu(i,1)}u)}
 {(1-q^{\mu_{n+1}-1}u)}(1-q^{\mu_{\ell}-1}u)
 \\
 =&\prod_{j=1}^{\mu_\ell}
\frac{1-q^{j-1}u}{1-q^{-\ell_\mu(\ell,j)+j-2}u}
\prod_{i=1}^{\ell-1}(1-q^{\ell-i+a_\mu(i,1)}u). 
 \end{align*}
 \bigskip

 \noindent{\underline{Proof of \eqref{eq-N-q-4} }}.\quad  
In \eqref{eq-N-q-5}, choose $n=\ell-1$, then rename 
$\ell$ by $\ell+1$ and $\lambda_i$ by $\lambda_i-1$. 
We obtain \eqref{eq-N-q-4}.
 \bigskip

 \noindent{\underline{Proof of \eqref{eq-N-q-2} }}.\quad  
This follows from \eqref{eq-N-q-4} and \eqref{eq-N-q-5}.
 \bigskip

\noindent{\underline{Proof of \eqref{eq-N-q-3} }}.\quad  
Using the relation  \eqref{rule1} 
and \eqref{eq-N-q-4}, we have 
\begin{align*}
\frac{N_{\mu,\lambda}(u)N_{\bar{\mu},\eta}(q^2u)}
{N_{\bar{\mu},\lambda}(qu)N_{\mu,\eta}(qu)}=&
q^{|\eta|-|\lambda|-k
}
\frac{N_{\lambda,\mu}(u^{-1})N_{\eta,\bar{\mu}}((q^2u)^{-1})}
{N_{\lambda,\bar{\mu}}((qu)^{-1})N_{\eta,\mu}((qu)^{-1})}
\\
=&
q^{|\eta|-|\lambda|-k}
\prod_{j=1}^{\lambda_{k+1}}\frac{1-q^{j-1}u^{-1}}
{1-q^{-\ell_\lambda(k+1,j)+j-2}u^{-1}}
\prod_{j=1}^{\eta_{k+1}}\frac{1-q^{-\ell_\eta(k+1,j)+j-3}u^{-1}}
{1-q^{j-2}u^{-1}}
\\
&\times 
\prod_{i=1}^k\frac{1-q^{k-i+a_\lambda(i,1)+1}u^{-1}}
{1-q^{k-i+a_\eta(i,1)}u^{-1}}. 
\end{align*}
First, we consider the case 
 $n\le \ell-1$. 
By definition, we have for $k+1\le n$ 
 \begin{align*}
 \ell_\eta(k+1,j)=\begin{cases}
 \ell_\lambda(k+1,j)-1& (j=1,\ldots,\lambda_{n+1}),
\\
n-(k+1)& (j=\lambda_{n+1}+1), 
 \\
 \ell_\lambda(k+1,j-1)& (j=\lambda_{n+1}+2,\ldots,
\lambda_1+1)\,,
\end{cases}
 \end{align*}
and for  $k+1\ge n+1$
\begin{align*}
\ell_\eta(k+1,j)=\ell_\lambda(k+1,j)-1\quad (j=1,\ldots,\eta_{k+1}). 
\end{align*}
 Hence, we obtain for $k+1\le n$, 
 \begin{align*}
 \frac{N_{\mu,\lambda}(u)N_{\bar{\mu},\eta}(q^2u)}
{N_{\bar{\mu},\lambda}(qu)N_{\mu,\eta}(qu)}=&
q^{|\eta|-|\lambda|-k  
}
\prod_{j=1}^{\lambda_{k+1}}\frac{1-q^{j-1}u^{-1}}
{1-q^{-\ell_\lambda(k+1,j)+j-2}u^{-1}}
\prod_{j=1}^{\lambda_{n+1}}\left(1-q^{-\ell_\lambda(k+1,j)+j-2}u^{-1}\right)
\\
&\times\left(1-q^{k+1-n+\lambda_{n+1}-2}u^{-1}\right)\frac
{
\prod_{j=\lambda_{n+1}+2}^{\lambda_{k+1}+1}
\left(1-q^{-\ell_\lambda(k+1,j-1)+j-3}u^{-1}\right)}
{\prod_{j=1}^{\lambda_{k+1}+1}\left(1-q^{j-2}u^{-1}\right)}
\\
=&q^{|\eta|-|\lambda|-k  
}\times
\frac{1-q^{k-n+\lambda_{n+1}-1}u^{-1}}
{1-q^{-1}u^{-1}}. 
 \end{align*}
We also obtain for $k+1\ge n+1$, 
\begin{align*}
\frac{N_{\mu,\lambda}(u)N_{\bar{\mu},\eta}(q^2u)}
{N_{\bar{\mu},\lambda}(qu)N_{\mu,\eta}(qu)}=
&q^{|\eta|-|\lambda|-k  
}
\prod_{j=1}^{\lambda_{k+1}}\frac{1-q^{j-1}u^{-1}}
{1-q^{-\ell_\lambda(k+1,j)+j-2}u^{-1}}
\prod_{j=1}^{\lambda_{k+2}}
\frac{1-q^{-\ell_\lambda(k+1,j)+j-2}u^{-1}}
{1-q^{j-2}u^{-1}}
\\
&\times 
\prod_{i=n+1}^k\frac{1-q^{k-i+a_\lambda(i,1)+1}u^{-1}}
{1-q^{k-i+a_\lambda(i+1,1)}u^{-1}}
\\
=&q^{|\eta|-|\lambda|-k
}\times
\frac{1-q^{k-n+\lambda_{n+1}-1}u^{-1}}
{1-q^{-1}u^{-1}}. 
\end{align*} 
 Since $|\eta|-|\lambda|=n-\lambda_{n+1}$, 
we obtain \eqref{eq-N-q-3} for $n\le \ell-1$. 
 
 Second, we consider the case 
  $n\ge \ell$.  By the definition, we have 
\begin{align*}
\ell_\eta(i,j)=&\ell_\lambda(i,j-1)\quad ((i,j)\in\eta,\ j\ge 2),
\\
\ell_\eta(i,1)=&n-i\quad (i\in\Z_{>0}).  
\end{align*}
 Hence, we obtain 
 \begin{align*}
 &\frac{N_{\mu,\lambda}(u)N_{\bar{\mu},\eta}(q^2u)}
{N_{\bar{\mu},\lambda}(qu)N_{\mu,\eta}(qu)}
\\
&=
q^{|\eta|-|\lambda|-k  
}
\prod_{j=1}^{\lambda_{k+1}}
\frac{1-q^{j-1}u^{-1}}
{1-q^{-\ell_\lambda(k+1,j)+j-2}u^{-1}}
\prod_{j=2}^{\eta_{k+1}}
\frac{1-q^{-\ell_\lambda(k+1,j-1)+j-3}u^{-1}}
{1-q^{j-2}u^{-1}}\cdot
\frac{1-q^{n+k-1}u^{-1}}{1-q^{-1}u^{-1}}
\\
&\times 
\prod_{i=n+1}^k\frac{1-q^{k-i}u^{-1}}
{1-q^{k-i-1}u^{-1}}
\\
&=q^{|\eta|-|\lambda|-k  
}\times \frac{1-q^{-n+k-1}u^{-1}}{1-q^{-1}u^{-1}}. 
 \end{align*}
 Since $|\eta|-|\lambda|=n$, 
we obtain \eqref{eq-N-q-3} for $n\ge \ell$. 
 \bigskip
 
 \noindent{\underline{Proof of \eqref{eq-N-q-6}}}.\quad  
By \eqref{eq-N-q-4}, we have 
\begin{align*}
N_{\gamma,\lambda}(u)=
&N_{\gamma,\bar{\lambda}}(q^{-1}u)
\prod_{j=1}^{\gamma_{\ell+1}}
\frac{1-q^{j-1}u}{1-q^{-\ell_\gamma(\ell+1,j)+j-2}u}\,
\prod_{i=1}^{\ell}(1-q^{l-i+a_\gamma(i,1)+1}u).
\end{align*}
Recall that $\gamma=(s_n(\mu'))'$. 
First, we consider the case 
  $n\le \mu_1-1$. By the definition, we have 
  \begin{align*}
  \ell_\gamma(i,j)=&\begin{cases}
 \ell_\mu(i,j)+1&((i,j)\in\gamma, j\le n),
 \\
 \ell_\mu(i,j+1)& ((i,j)\in\gamma, j\ge n+1),
 \end{cases}
\\
a_\gamma(i,j)=& \begin{cases}
a_\mu(i,j)-1&(i=1,\ldots, \mu'_{n+1}, j\in\Z_{>0}),
\\
n-j&(i=\mu'_{n+1}+1, j\in\Z_{>0}),
\\
a_\mu(i-1,j)&((i,j)\in\gamma, i\ge \mu'_{n+1}+2, j\in\Z_{>0}). 
\end{cases}
  \end{align*}
Hence, for $\ell+1\le \mu'_{n+1}$ we have
\begin{align*}
N_{\gamma,\lambda}(u)=
&N_{\gamma,\bar{\lambda}}(q^{-1}u)
\frac{\prod_{j=1}^{\mu_{\ell+1}-1}\left(1-q^{j-1}u\right)}
{\prod_{j=1}^n\left(1-q^{-\ell_\mu(\ell+1,j)+j-3}u\right)
\prod_{j=n+1}^{\mu_{\ell+1}}\left(1-q^{-\ell_\mu(\ell+1,j+1)+j-2}u\right)}
\\
&\times 
\prod_{i=1}^{\ell}(1-q^{l-i+a_\mu(i,1)}u)
\\
=&N_{\gamma,\bar{\lambda}}(q^{-1}u)
\prod_{j=1}^{\mu_\ell}
\frac{1-q^{j-1}u}{1-q^{-\ell_\mu(\ell,j)+j-2}u}
\cdot \frac{1-q^{-\mu'_{n+1}+\ell+n-1}u}
{1-q^{\mu_\ell-1}u}
\\
&\times 
\prod_{i=1}^{\ell-1}(1-q^{l-i+a_\mu(i,1)}u)
\cdot \left(1-q^{\mu_\ell-1}u\right). 
\end{align*}

Similarly, for 
$\ell+1=\mu'_{n+1}+1$ 
we have 
\begin{align*}
N_{\gamma,\lambda}(u)=
&N_{\gamma,\bar{\lambda}}(q^{-1}u)
\prod_{j=1}^n
\frac{1-q^{j-1}u}{1-q^{-\ell_\mu(\ell+1,j)+j-3}u}
\prod_{i=1}^{\ell}(1-q^{l-i+a_\mu(i,1)}u)\cdot\left(1-q^{\mu_\ell-1}u\right)
\\
=&N_{\gamma,\bar{\lambda}}(q^{-1}u)
\prod_{j=1}^{\mu_\ell}
\frac{1-q^{j-1}u}{1-q^{-\ell_\mu(\ell,j)+j-2}u}
\cdot \frac{1-q^{n-1}u}
{1-q^{\mu_\ell-1}u}
\prod_{i=1}^{\ell-1}(1-q^{l-i+a_\mu(i,1)}u)\cdot\left(1-q^{\mu_\ell-1}u\right). 
\end{align*}
Here, we use the inequality $\mu_{\ell+1}\le n$. 

For $\ell+1\ge \mu'_{n+1}+2$ we have 
\begin{align*}
N_{\gamma,\lambda}(u)=
&N_{\gamma,\bar{\lambda}}(q^{-1}u)
\prod_{j=1}^{\mu_\ell}
\frac{1-q^{j-1}u}{1-q^{-\ell_\mu(\ell,j)+j-2}u}
\prod_{i=1}^{\mu'_{n+1}}\left(1-q^{\ell-i+a_\mu(i,1)}u\right)
\cdot \left(1-q^{\ell-\mu'_{n+1}+n-1}u\right)
\\
&\times 
\prod_{i=\mu'_{n+1}+2}^\ell\left(1-q^{\ell-(i-1)+a_\mu(i-1,1)}u\right)
\\
=&N_{\gamma,\bar{\lambda}}(q^{-1}u)
\prod_{j=1}^{\mu_\ell}
\frac{1-q^{j-1}u}{1-q^{-\ell_\mu(\ell,j)+j-2}u}
\prod_{i=1}^{\ell-1}(1-q^{l-i+a_\mu(i,1)}u)\cdot 
\left(1-q^{\ell-\mu'_{n+1}+n-1}u\right). 
\end{align*}
Here, we use the inequality $\mu_{\ell}\le n$. 
Since $|\gamma|-|\mu|=n-\mu'_{n+1}$, we obtain \eqref{eq-N-q-6} for $n\le \mu_1-1$. 

Second, we consider the case 
  $n\ge \mu_1$. By the definition, we have 
  \begin{align*}
 \ell_\gamma(i,j)=&
 \ell_\mu(i-1,j)\quad  ((i,j)\in\gamma, i\ge 2),
\\
a_\gamma(i,j)=& \begin{cases}
n-j&(i=1, j\in\Z_{>0}),
\\
a_\mu(i-1,j)&((i,j)\in\gamma, i\ge 2, j\in\Z_{>0}). 
\end{cases}
  \end{align*}
  Hence, we obtain
  \begin{align*}
  N_{\gamma,\lambda}(u)=
&N_{\gamma,\bar{\lambda}}(q^{-1}u)
\prod_{j=1}^{\mu_\ell}
\frac{1-q^{j-1}u}
{1-q^{-\ell_\mu(\ell,j)+j-2}u}
\prod_{i=2}^\ell\left(1-q^{\ell-i+a_\mu(i-1,1)+1}u\right)\cdot \left(1-q^{\ell-1+n}u\right). 
  \end{align*}
  Since $|\gamma|-|\mu|=n$, we obtain \eqref{eq-N-q-6} for $n\ge \mu_1$. 
\end{proof}

Lemma \ref{lem-N-factors} enables us to reduce $\lambda$ to $\bar\lambda$ 
of $X_{\lambda, \mu, \alpha, \beta}
^{\e}$ and $Y_{\lambda, \mu, \alpha, \beta}
^{\e}$. Denote by $T_1$  
 the $q$-shift operator with respect to $x_1$:
$(T_1f)(x_1,\ldots,x_n)=f(qx_1,\ldots,x_n)$. 
\begin{lem}\label{lem-lambda}
For partitions $\lambda$, $\mu$, $\alpha$, $\beta$, we have 
\begin{align*}
X_{\lambda, \mu, \alpha, \beta}
^{\e}(\theta_\infty,\theta_1,\sigma;x_1,x_2)=&
C(1-q^{-\e\theta_\infty-\theta_1-\sigma-1/2})q^{\e\theta_\infty-\theta_1-\sigma-1/2}
\\
&\times (q^{-\ell(\lambda) 
-\e\theta_\infty+\theta_1+\sigma+1/2}T_1-1)
\left( X_{\bar\lambda 
, \mu, \alpha, \beta}
^{+}(\theta_\infty,\theta_1+\half,\sigma+\half;x_1,x_2)\right),
\\
Y_{\lambda, \mu, \alpha, \beta}
^{+}(\theta_\infty,\theta_1,\sigma;x_1,x_2)=&
C\frac{(1-q^{\pm\theta_\infty-\theta_1-\sigma-1/2})}{(1-q^{2\sigma+1})}
\\
&\times(1-q^{-\ell(\lambda)+1+2\sigma}T_1)\left(
Y_{\bar\lambda 
, \mu, \alpha, \beta}
^{+}(\theta_\infty,\theta_1+\half,\sigma+\half;x_1,x_2)
\right),
\\
Y_{\lambda, \mu, \alpha, \beta}
^{-}(\theta_\infty,\theta_1,\sigma;x_1,x_2)=&C(1-q^{-2\sigma})
\frac{x_2}{x_1}
\\
&\times (1-q^{-\ell(\lambda)}T_1)
\left(
Y_{\bar\lambda 
, \mu, \alpha, \beta}
^{-}(\theta_\infty,\theta_1+\half,\sigma+\half;x_1,x_2)
\right),
\end{align*}
where 
\begin{align*}
C=&\prod_{j=1}^{\alpha_{\ell(\lambda)}}
\frac{1-q^{j+\theta_\infty-\theta_1-\sigma-1/2}}
{1-q^{-\ell_\alpha(\ell(\lambda),j)+j+\theta_\infty-\theta_1-\sigma-3/2}}
\prod_{i=1}^{\ell(\lambda)-1}(1-q^{\ell(\lambda)-i+a_\alpha(i,1)+\theta_\infty-\theta_1-\sigma+1/2})
\\
&\times\prod_{j=1}^{\beta_{\ell(\lambda)}}
\frac{1-q^{j-\theta_\infty-\theta_1-\sigma-1/2}}
{1-q^{-\ell_\beta(\ell(\lambda),j)+j-\theta_\infty-\theta_1-\sigma-3/2}}
\prod_{i=1}^{\ell(\lambda)-1}(1-q^{\ell(\lambda)-i+a_\beta(i,1)-\theta_\infty-\theta_1-\sigma+1/2}) 
\\
&\times q^{\ell(\lambda)-|\alpha|-|\beta|}x_2^{-\ell(\lambda)}. 
\end{align*}
\qed
\end{lem}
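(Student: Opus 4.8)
The plan is to verify each of the four identities by substituting the reduction relations of Lemma~\ref{lem-N-factors} into the series defining $X^\e_{\lambda,\mu,\alpha,\beta}$ and $Y^\e_{\lambda,\mu,\alpha,\beta}$, and then reading off the result as the asserted $q$-difference operator applied to the $\bar\lambda$-instance. The computations for the two signs of $\e$ run parallel: by the definitions in Proposition~\ref{prop-braiding-relation} the $X^-$ (resp.\ $Y^-$) case is obtained from the $X^+$ (resp.\ $Y^+$) manipulation via the built-in relabelling $\theta_\infty\mapsto-\theta_\infty$, $\alpha\leftrightarrow\beta$ (resp.\ $\lambda\leftrightarrow\mu$, $\sigma\mapsto-\sigma$); in particular, after that relabelling the $Y^-$ identity becomes the reduction of the partition that cuts the $\eta$-sum down to a single term, whereas in $X^+$ and $Y^+$ the partition $\lambda$ being reduced is a spectator. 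So it suffices to carry out $X^+$ and $Y^+$ in detail, and I would take $X^+$ as the model case.

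The elementary point that makes everything run is that the shift $(\theta_1,\sigma)\mapsto(\theta_1+\tfrac12,\sigma+\tfrac12)$ multiplies by $q^{-1}$ exactly those Nekrasov arguments in which $\lambda$ occurs as the second (hence shorter) index, while $|\lambda|-|\bar\lambda|=\ell(\lambda)$ accounts for the monomial $x_2^{-\ell(\lambda)}$ that sits inside $C$. For $X^+$, the $\lambda$-dependence is carried by the prefactor $N_{\beta,\lambda}(q^{-\theta_\infty-1/2-\theta_1-\sigma})$, by the monomial $x_2^{-|\lambda|}$, and by the factor $N_{\eta,\lambda}(q^{\theta_\infty+1/2-\theta_1-\sigma})$ under the sum. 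I would rewrite $N_{\beta,\lambda}$ through $N_{\beta,\bar\lambda}$ and $N_{\eta,\lambda}$ through $N_{\eta,\bar\lambda}$ using \eqref{eq-N-q-4} (and, where the interplay with the denominator $N_{\eta,\beta}(q^{2\theta_\infty+1})$ or with the summation constraint makes it cleaner, \eqref{eq-N-q-2}, \eqref{eq-N-q-3}, \eqref{eq-N-q-6}). The explicit products thus produced split into an $\eta$-independent part, which---once recombined with the change in the $x_2$-power---assembles into $C$, and an $\eta$-dependent part; here the constraint $N_{\alpha,\eta}(q^{-1})\neq0$, i.e.\ $\alpha=r_n(\eta)$ for some $n\ge0$, enters, pinning down $\eta_i$ for $i\neq n+1$ in terms of $\alpha$ and thereby forcing the $\eta$-dependent product to collapse to a single linear factor. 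Together with the remaining scalar this is exactly $q^{\theta_\infty-\theta_1-\sigma-1/2}(1-q^{-\theta_\infty-\theta_1-\sigma-1/2})(q^{c+|\alpha|-|\eta|}-1)$ with $c=-\ell(\lambda)-\theta_\infty+\theta_1+\sigma+\tfrac12$; since the $x_1$-weight of the $\eta$-term of $X^+_{\bar\lambda}$ is $|\alpha|-|\eta|$, the last factor is precisely the eigenvalue of $q^{c}T_1-1$ on that term, and the stated formula drops out.

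For $Y^+$ the scheme is the same, with two twists. First, $\lambda$ now occurs both as first index (in $N_{\lambda,\mu}(q^{2\sigma})$ and $N_{\lambda,\eta}(q^{2\sigma+1})$) and as second index (in $N_{\alpha,\lambda}$, $N_{\beta,\lambda}$); I would convert the first-index occurrences to second-index ones via Lemma~\ref{lem-Nfactor-symmetry}, or absorb them using \eqref{rule1}, before applying \eqref{eq-N-q-4}, and I would reduce $N_{\lambda,\mu}(q^{2\sigma})$ and $N_{\lambda,\eta}(q^{2\sigma+1})$ jointly so that the $1-q^{2\sigma+1}$-type denominator in the statement is produced cleanly. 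Second, in the $Y^-$ relabelling the partition being reduced is the one constraining the sum, so there $\eta=r_n(\lambda)$ and the reduction $\lambda\mapsto\bar\lambda$ must be done in tandem with $\eta\mapsto\tilde\eta$; the relevant inputs are \eqref{eq-N-q-1}, \eqref{eq-N-q-2}, \eqref{eq-N-q-5}, \eqref{eq-N-q-6}, and the one substantive observation is that this pairing is a bijection between the two summation sets.

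The hard part, where essentially all of the labor sits, is this final collapse in each case: verifying that the product of Nekrasov leftovers genuinely telescopes---under the shape constraint on $\eta$---to one linear factor, and that this works uniformly across the two regimes $n\le\ell(\lambda)-1$ and $n\ge\ell(\lambda)$ in which the formulas of Lemma~\ref{lem-N-factors} branch. I expect the cleanest bookkeeping to be in terms of the arm- and leg-lengths along the first column of $\lambda$---which is exactly why the exponents in the statement carry $\ell(\lambda)$, $a_\alpha(i,1)$, $a_\beta(i,1)$, $\ell_\alpha(\ell(\lambda),j)$ and $\ell_\beta(\ell(\lambda),j)$---and I would write $X^+$ out in full, then obtain $Y^+$, and finally $X^-$ and $Y^-$, by repeating the manipulations with the substitutions indicated above.
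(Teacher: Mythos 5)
The paper states this lemma without proof, presenting it as an immediate consequence of Lemma \ref{lem-N-factors}, and your plan is exactly that computation: \eqref{eq-N-q-4} and \eqref{eq-N-q-6} for $X^{+}$, \eqref{eq-N-q-3} together with \eqref{eq-N-q-2} for $Y^{+}$, and \eqref{eq-N-q-1}, \eqref{eq-N-q-5} with the pairing $\eta\mapsto\tilde\eta$ for $Y^{-}$, the $\eta$-independent leftovers assembling into $C$ and the remaining linear factor in $|\eta|$ being read as the eigenvalue of the $q$-shift operator, with the $\e=-$ cases reduced to $\e=+$ by the stated symmetries. One caveat worth recording: $\eta\mapsto\tilde\eta$ is not quite a bijection of summation sets, since it misses the single term $\eta'=\lambda$ on the $\bar\lambda$ side, but that term is annihilated by $(1-q^{-\ell(\lambda)}T_1)$ (its eigenvalue there is $1-q^{|\eta'|-|\lambda|}=0$), so the term-by-term identity is unaffected.
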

\bigskip

\textbf{Proof of Proposition \ref{prop-braiding-relation}.}
By induction. 
If all partitions $\lambda$, $\mu$, $\alpha$ and $\beta$ are equal to $\emptyset$, 
then the identity \eqref{eq-general-partitions} 
is the connection formula of basic hypergeometric functions. 
By definition we have  the symmetries  
\begin{align*}
&X_{\lambda, \mu, \alpha, \beta}
^{\e}(\theta_\infty,\theta_1,\sigma;x_1,x_2)=
X_{\mu,\lambda, \alpha, \beta}
^{\e}(\theta_\infty,\theta_1,-\sigma;x_1,x_2),
\\
&X_{\lambda, \mu, \alpha, \beta}
^{-}(\theta_\infty,\theta_1,\sigma;x_1,x_2)=
X_{\lambda, \mu,  \beta,\alpha}
^{+}(-\theta_\infty,\theta_1,\sigma;x_1,x_2),
\\
&Y_{\lambda, \mu, \alpha, \beta}
^{-}(\theta_\infty,\theta_1,\sigma;x_1,x_2)=
Y_{ \mu,\lambda, \alpha, \beta}
^{+}(\theta_\infty,\theta_1,-\sigma;x_1,x_2),
\\
&Y_{\lambda, \mu, \alpha, \beta}
^{\e}(\theta_\infty,\theta_1,\sigma;x_1,x_2)=
Y_{\lambda, \mu,  \beta,\alpha}
^{\e}(-\theta_\infty,\theta_1,\sigma;x_1,x_2),
\\
&\mathcal{B}_{-,\e}\left[\begin{matrix}
\theta_1&\half
\\
\theta_\infty&\sigma
\end{matrix} \Bigl| \frac{x_2}{x_1}\right]=\mathcal{B}_{+,\e}\left[\begin{matrix}
\theta_1&\half
\\
\theta_\infty&-\sigma
\end{matrix} \Bigl| \frac{x_2}{x_1}\right],\quad 
\B_{\e,-}\left[\begin{matrix}
\theta_1&\half
\\
\theta_\infty&\sigma
\end{matrix} \Bigl| \frac{x_2}{x_1}\right]=\B_{\e,+}\left[\begin{matrix}
\theta_1&\half
\\
-\theta_\infty&\sigma
\end{matrix} \Bigl| \frac{x_2}{x_1}\right],
\end{align*}
and by Lemma \ref{lem-Nfactor-symmetry} we have 
\begin{equation*}
X_{\lambda, \mu, \alpha, \beta}
^{\e}(\theta_\infty,\theta_1,\sigma;x_1,x_2)=
Y_{\beta',\alpha',\mu',\lambda'}
^{\e}(\sigma,\theta_1,\theta_\infty;(qx_1)^{-1},(q^{2\theta_1}x_2)^{-1}).
\end{equation*}
Recall also the relations \eqref{braid-q-per} and \eqref{braid-inv} 
for
the braiding matrix. 
Because of these relations, it suffices to show that 
for any partitions 
$\lambda$, $\mu$, $\alpha$ and $\beta$, 
 the connection formula \eqref{eq-general-partitions} is equivalent to the formula \eqref{eq-general-partitions} for 
 $\bar\lambda$,
$\mu$, $\alpha$ and $\beta$ 
acting by a $q$-difference operator.  
 
By Lemma \ref{lem-lambda}, the left hand side of \eqref{eq-general-partitions} is computed as follows:
\begin{align*}
&q^{\theta_\infty^2}\left(\frac{q^{2\theta_1} x_2}{x_1}\right)^{\theta_\infty+1/4}
\frac{
\prod_{\e''=\pm}
\Gamma_q(\half+\theta_\infty-\theta_1+\e'' \sigma)
}{\Gamma_q(1+2\theta_\infty)}X_{\lambda,\mu,\alpha,\beta}
^{+}(\theta_\infty,\theta_1,\sigma;x_1,x_2)
\\
&=q^{\theta_\infty^2}\left(\frac{q^{2\theta_1+1} x_2}{x_1}\right)^{\theta_\infty+1/4}
q^{-\theta_\infty-1/4}
\frac{
\prod_{\e''=\pm}
\Gamma_q(\half+\theta_\infty-(\theta_1+\half)+\e'' (\sigma+\half))
}
{\Gamma_q(1+2\theta_\infty)}
\frac{1-q^{\pm\theta_\infty-\theta_1-\sigma-1/2}}{1-q}
\\
&\times Cq^{\theta_\infty-\theta_1-\sigma-1/2}
 (q^{-\ell(\lambda)
 -\theta_\infty+\theta_1+\sigma+1/2}T_1-1)
\left( X_{\bar\lambda
, \mu, \alpha, \beta}
^{+}(\theta_\infty,\theta_1+\half,\sigma+\half;x_1,x_2)\right)
\\
&=Cq^{-\theta_1-\sigma-3/4}
\frac{1-q^{\pm\theta_\infty-\theta_1-\sigma-1/2}}{1-q}
q^{\theta_\infty^2}
\frac{
\prod_{\e''=\pm}
\Gamma_q(\half+\theta_\infty-(\theta_1+\half)+\e'' (\sigma+\half))
}
{\Gamma_q(1+2\theta_\infty)}
\\
&\times 
\left( q^{-\ell(\lambda)
+\theta_1+\sigma+3/4}T_1-1\right)\left( 
\left(\frac{q^{2\theta_1+1} x_2}{x_1}\right)^{\theta_\infty+1/4}
X_{\bar\lambda 
, \mu, \alpha, \beta}
^{+}(\theta_\infty-\half,\theta_1+\half,\sigma)
\right). 
\end{align*}
Similarly, by Lemma \ref{lem-lambda}, the right hand side of \eqref{eq-general-partitions} 
is computed as follows:
\begin{align*}
&\sum_{\e=\pm}q^{\sigma^2}\left(\frac{q x_1}{x_2}\right)^{\e\sigma+1/4}
\frac{
\prod_{\e''=\pm}
\Gamma_q(\half+\e''\theta_\infty-\theta_1+\e \sigma)
}{\Gamma_q(1+2\e\sigma)}
 Y_{\lambda, \mu, \alpha, \beta}
^{\e}(\theta_\infty,\theta_1,\sigma;x_1,x_2)
\\
&\times 
 \B_{\e,+}\left[\begin{matrix}
\theta_1&\half 
\\
\theta_\infty&\sigma
\end{matrix} \Bigl| \frac{x_2}{x_1}\right]
q^{\theta_1^2-\theta_1/2}\left(\frac{x_2}{x_1}\right)^{\theta_1}
\\
&=\sum_{\e=\pm}q^{(\sigma+1/2)^2}\left(\frac{q x_1}{x_2}\right)^{\e(\sigma+1/2)+1/4}
\frac{
\prod_{\e''=\pm}
\Gamma_q(\half+\e''\theta_\infty-(\theta_1+\half)+\e (\sigma+\half))
}
{\Gamma_q(1+2\e(\sigma+\half))}
\frac{1-q^{\pm\theta_\infty-\theta_1-\sigma-1/2}}{1-q}
\\
 &\times \B_{\e,+}\left[\begin{matrix}
\theta_1+\half&\half
\\
\theta_\infty&\sigma+\half
\end{matrix} \Bigl| \frac{x_2}{x_1}\right]
q^{(\theta_1+1/2)^2-(\theta_1+1/2)/2}\left(\frac{x_2}{x_1}\right)^{\theta_1+1/2}
q^{-\theta_1-\sigma-3/4}
\\
&\times
C(q^{-\ell(\lambda)+(1+\e)(\sigma+1/2)}T_1-1)
    \left( Y_{\bar\lambda 
    , \mu, \alpha, \beta}^{\e}
(\theta_\infty,\theta_1+\half,\sigma+\half)\right)
\\
&=Cq^{-\theta_1-\sigma-3/4}
\frac{1-q^{\pm\theta_\infty-\theta_1-\sigma-1/2}}{1-q}
\sum_{\e=\pm}q^{(\sigma+1/2)^2}
\frac{
\prod_{\e''=\pm}
\Gamma_q(\half+\e''\theta_\infty-(\theta_1+\half)+\e (\sigma+\half))
}
{\Gamma_q(1+2\e(\sigma+\half))}
\\
&\times \left\{(q^{-\ell(\lambda)+\theta_1+\sigma+3/4}T_1-1)
    \left( \left(\frac{q x_1}{x_2}\right)^{\e(\sigma+1/2)+1/4}Y_{\bar\lambda
, \mu, \alpha, \beta}^{\e}
(\theta_\infty,\theta_1+\half,\sigma+\half)
 \right.\right.
 \\
 &\times \left.\left.\B_{\e,+}\left[\begin{matrix}
\theta_1+\half&\half
\\
\theta_\infty&\sigma+\half
\end{matrix} \Bigl| \frac{x_2}{x_1}\right]q^{(\theta_1+1/2)^2-(\theta_1+1/2)/2}\left(\frac{x_2}{x_1}\right)^{\theta_1+1/2}\right)\right\}. 
\end{align*}
Therefore, the identity \eqref{eq-general-partitions} with parameters $\theta_\infty,\theta_1,\sigma$ 
for $\lambda,\mu,\alpha,\beta$ is 
reduced to the identity \eqref{eq-general-partitions} with parameters $\theta_\infty,\theta_1+1/2,\sigma+1/2$ 
for $\bar\lambda 
,\mu,\alpha,\beta$. 
\qed

\vskip 1cm

 \textbf{Acknowledgments.}
The authors 
would like to thank 
Kazuki Hiroe, 
Oleg Lisovyy, 
Toshiyuki Mano, 
Yousuke Ohyama, 
Yusuke Ohkubo,
Teruhisa Tsuda, 
Yasuhiko Yamada and 
Shintaro Yanagida
 for discussions. 

MJ is partially supported by JSPS KAKENHI Grant Number JP16K05183. 
HN is partially supported by JSPS KAKENHI Grant Number JP15K17560.
HS is partially supported by JSPS KAKENHI Grant Number JP15K04894.

\end{document}